\definecolor{darkgreen}{rgb}{0,0.2,0}
\definecolor{darkred}{rgb}{0.3,0,0}
\newcounter{llst}
\newenvironment{abet}{\begin{list}{\rm (\alph{llst})}{\usecounter{llst}
\setlength{\itemindent}{0em} \setlength{\leftmargin}{3em}
\setlength{\labelwidth}{2em} \setlength{\labelsep}{1em}}}{\end{list}}
\newcounter{llist}
\newenvironment{numm}{\begin{list}{\rm (\roman{llist})}{\usecounter{llist}
\setlength{\itemindent}{0em} \setlength{\leftmargin}{3.5em}
\setlength{\labelwidth}{2.5em} \setlength{\labelsep}{1em}}}{\end{list}}
\newtheorem{theorem}{Theorem}[section]
\newtheorem{corollary}[theorem]{Corollary}
\newtheorem{definition}[theorem]{Definition}
\newtheorem{expl}[theorem]{Example}
\newtheorem{lemma}[theorem]{Lemma}
\newtheorem{proposition}[theorem]{Proposition}
\newtheorem{dscrpt}[theorem]{Description}
\newcounter{axiomatiser}
\newcounter{myclaimcount}
\newenvironment{proof}[1][Proof]{\noindent \textbf{#1.} }{\hfill
\rule{0.5em}{0.5em}}
\newenvironment{example}{\begin{expl} \rm}{\hfill $\blacklozenge$
\end{expl}}
\begin{document}


\title{\textbf{Construction of Compromise Values \\ for Cooperative Games}\thanks{We thank Lina Mallozzi for comments, suggestions and help with progressing on the ideas set out in this paper.}}

\author{Robert P.~Gilles\thanks{\textbf{Corresponding author:} Department of Economics, The Queen's University of Belfast, Riddel Hall, 185 Stranmillis Road, Belfast, BT9~5EE, UK. \textsf{Email: r.gilles@qub.ac.uk}} \and Ren\'e van den Brink\thanks{ VU University, Department of Economics, and Tinbergen Institute, De Boelelaan 1105, 1081 HV~Amsterdam, the Netherlands. \textsf{Email: j.r.vanden.brink@vu.nl}} }

\date{July 2026}

\maketitle

\begin{abstract}
\singlespace
\noindent
We explore a broad class of values for cooperative games in characteristic function form, known as \emph{compromise values\/}. These values efficiently allocate payoffs by linearly combining well-specified upper and lower bounds on payoffs. We identify subclasses of games that admit non-trivial efficient allocations within the considered bounds, which we call \emph{bound-balanced games}. Subsequently, we define the associated compromise value. We also provide an axiomatisation of this class of compromise values using variants of the minimal rights property and restricted proportionality.

We introduce two construction methods for properly devised compromise values. Under mild conditions, one can use either a lower or an upper bound to construct a well-defined compromise value.

Based on these methods, we construct and axiomatise various well-known and new compromise values, including the $\tau$-, the $\chi$-, the Egalitarian, the Gately, the CIS-, the PANSC-, the EANSC-, and the new KM-values. We conclude that this approach establishes a common foundation for a wide range of different values.
\end{abstract}

\begin{description}
\singlespace
\item[Keywords:] Cooperative TU-game; compromise value; $\tau$-value; CIS value; Gately value; KM-value; axiomatisation; bound balanced games.

\item[JEL classification:] C71
\end{description}

\thispagestyle{empty}

\newpage

\setcounter{page}{1} \pagenumbering{arabic}

\section{Introduction}

In the theory of cooperative games with transferable utility (TU-games), the analysis of values has resulted in a large, prominent literature. A value on a particular class of TU-games assigns to every game in this class an efficient allocation of the proceeds of the grand coalition to the individual members of the population. Most prominent are the Shapley value \citep{Shapley1953} and its variations, which are firmly rooted in well-accepted axiomatic characterisations. Other values are founded on selecting Core imputations, if the Core \citep{Edgeworth1881,Gillies1959} is non-empty, such as the Nucleolus \citep{Schmeidler1969}. In this paper, we investigate an alternative class of values, known as \emph{compromise values}.  

A compromise value assigns the efficient balance of a given upper and lower bound to any TU-game for which the stated bounds are well-defined. Specifically, it assigns to every game on an appropriate subclass of TU-games the unique efficient allocation on the line segment between these two bounds.  The main precedent of such a compromise value is the $\tau$-value \citep{Tijs1981}, based on balancing the marginal contributions as an upper bound and the minimal rights payoffs as a lower bound. The $\tau$-value is well-defined on the class of semi-balanced games \citep{Tijs1981}. Further analysis of the $\tau$-value as a compromise value and its axiomatisation on the class of quasi-balanced games has been developed in \citet{TijsDriessen1985,TijsDriessen1987,Tijs1987,CalvoTijs1995,Chi1996,Compromise2001} and \citet{Compromise2005}.

Although we restrict our definitions and analysis of compromise values to the realm of cooperative games in characteristic function form with transferable utility, \citet{Otten1990,Compromise1992} as well as \citet{TijsOtten1993} extend the concept of compromise values to other realms such as non-transferable utility games and bargaining problems. 

We investigate a broad class of compromise values based on so-called \emph{bound pairs} comprising two functionals on the space of all TU-games. These functionals must satisfy three regularity properties that are naturally satisfied by those representing upper and lower bounds on allocations in a TU-game. This general approach encompasses the construction method introduced in \citet{Compromise2000note} which imposes a more restrictive covariance property on the upper bound functional.

Our approach allows the identification of a rather broad class of compromise values including well-known values such as the $\tau$-, $\chi$-, Egalitarian Value, Gately, Centre of the Imputation Set (CIS), Proportional Allocation of Non-Separable Contributions (PANSC), and Equal Allocation of Non-Separable Contributions (EANSC) values. We also introduce a new compromise value, the Kikuta-Milnor (KM) value based on a lower bound function introduced by \citet{Kikuta1980} and an upper bound functional considered by \citet{Milnor1952}.

Compromise values for a given bound pair are only well-defined on the subclass of bound-balanced TU-games corresponding to the TU-games that admit efficient allocations between the given upper and lower bounds. With reference to the $\tau$-value introduced by \citet{Tijs1981}, this subclass is that of quasi- and semi-balanced games. We show that the KM-bound pair---based on the bound functionals introduced by \citet{Kikuta1980} and \citet{Milnor1952}---admits the complete space of TU-games as the corresponding KM-bound-balanced games. The other compromise values considered here usually refer to strict subclasses of the space of all TU-games. 

\paragraph{Constructing bound pairs and their associated compromise values.}

We propose two methods to construct compromise values based on a single, fixed functional. First, we consider constructing compromise values using a lower bound functional that has a regularity property. The allocation functional that gives each player the remaining value after paying others their lower bound forms a bound pair with the lower bound functional. Therefore, the resulting compromise value depends entirely on the chosen lower bound. We show that the Egalitarian Value, the CIS-value, and the EANSC value are compromise values that can be constructed this way using the right lower bound functionals.

We demonstrate that these lower bound-based compromise values can be axiomatised by a variant of the minimal rights property and an egalitarian allocation property. The latter imposes the assignment of an equal share to all players for lower bound normalised games. Our variant of the egalitarian allocation property is more restrictive than the restricted proportionality property of \citet{Tijs1987}.

Second, we consider the construction of compromise values based on a given upper bound functional only. This construction method follows the framework set out by \citet{Tijs1981} for the $\tau$-value. This method identifies a minimal rights payoff vector based on the given upper bound on the payoffs in the TU-game. We show that if the upper bound functional is translation covariant, the constructed minimal rights payoff vector indeed defines an appropriate lower bound functional to form a bound pair with the given upper bound functional. The considered covariance property on the upper bound functional is less stringent than the covariance property introduced by \citet{Compromise2000note}. 

This paper considers a similar construction method for a corresponding lower bound, which is the minimal rights payoff vector for the given upper bound. Therefore, our construction method generalizes the construction method introduced by \citet{Compromise2000note}. We demonstrate that the $\tau$-value and the CIS-value can be constructed using this method. This implies that the CIS-value can be constructed from its characteristic lower bound functional as well as its characteristic upper bound functional. This highlights the singular nature of the CIS-value as a value that can be constructed from both a lower bound as well as an upper bound.

\paragraph{Axiomatisations}

We also consider axiomatisations of compromise values. The compromise value with respect to some bound pair is the unique allocation rule that---besides the standard efficiency property---satisfies two main properties that both are modifications of corresponding axioms in \citet{Tijs1987}: A form of covariance, known as the \emph{minimal rights} property, and the \emph{restricted proportionality} property. We show this characterisation to hold on the class of \emph{all} bound pairs.  

This type of axiomatisation is well-established in the literature for certain (classes of) compromise values, such as the seminal contribution by \citet{Tijs1981} on the $\tau$-value and \citet{Compromise2000note} for the upper bound constructed class of compromise values, replacing the minimal rights property with a covariance property. We complement this with a comparable characterisation for the class of lower bound constructed compromise values based on replacing the restricted proportionality with an egalitarian division property.

\paragraph{Motivation and applications.}

In the literature, different compromise values have been applied to a variety of problems. The $\tau$-value has been applied to delivery games \citep{Hamers1997} and to information sharing in Data Envelopment Analysis \citep{Lozano2012}, and its extension to games with a coalition structure \citep{Tau2003} underlies further applications. The Gately value was initially applied to sharing the gains from regional cooperation in electric power planning \citep{Gately1974} and, more recently, to measuring centrality in networks \citep{GillesMallozzi2024}. The PANSC- and EANSC values arise in cost-sharing problems; in particular, the Separable Cost Remaining Benefits and Alternative Costs Avoided methods coincide under general conditions with the PANSC value \citep{PANSC2023}.

From these applications it can be argued that no single compromise value is better than the others, since which value is most appropriate depends on the application at hand. Therefore, instead of focusing on one compromise value, this paper develops a more general methodology for building values from lower and upper bounds. In this way, the results in this paper (i) give insight into the differences between the compromise values that are most appropriate for different applications, and (ii) show that these different compromise values are nonetheless built on a common methodology for determining these bounds.

\paragraph{Structure of the paper.}

Section~2 collects the necessary preliminaries and reviews the $\tau$-value. Section~3 introduces compromise values for general bound pairs, extends the axiomatisation of \citet{Tijs1987} to this class, and illustrates it with the PANSC-, Gately and KM-values. Section~4 constructs compromise values from regular lower bounds and shows that the Egalitarian Value and the CIS-value arise in this way; Section~5 does the same for translation covariant upper bounds, recovering the $\tau$- and $\chi$-values as well as the CIS-value. Section~6 concludes by investigating the EANSC value, which admits both constructions.

\section{Preliminaries: Cooperative games and values}

We first discuss the foundational concepts of cooperative games and solution concepts. Let $N = \{ 1, \ldots , n \}$ be an arbitrary finite set of players and let $2^N = \{ S \mid S \subseteq N \}$ be the corresponding set of all (player) coalitions in $N$. For ease of notation we usually refer to the singleton $\{ i \}$ simply as $i$. Furthermore, we use the simplified notation $S-i = S \setminus \{ i \}$ for any $S \in 2^N$ and $i \in S$ as well as $S + i = S \cup \{i\}$ for any $S \in 2^N$ and $i \in N \setminus S$. 

A cooperative game with transferable utility---shortly referred to as a \emph{cooperative game}, or simply as a \emph{game}---on $N$ is a function $v \colon 2^N \to \mathbb R$ such that $v ( \varnothing ) =0$. A game assigns to every coalition a value, which we call the coalition's {\em worth\/}, being the value that this coalition can generate through the cooperation of its members.

The class of all cooperative games in the player set $N$ is denoted by
\begin{equation}
	\mathbb V^N = \left\{ v \, \left| \, v \colon 2^N \to \mathbb R \mbox{ such that } v ( \varnothing ) =0 \right. \right\}
\end{equation}
The class $\mathbb V^N$ forms an $(2^n-1)$-dimensional linear real vector space.

The \emph{dual} of a game $v \in \mathbb V^N$ is the game $v^* \in \mathbb V^N$ defined by $v^* (S) = v(N) - v(N \setminus S)$ for all $S \subseteq N$.

We use the natural ordering of the Euclidean space $\mathbb V^N$ to compare different cooperative games. In particular, we denote $v \leqslant w$ if and only if $v(S) \leqslant w(S)$ for all $S \neq \varnothing$; $v<w$ if and only if $v \leqslant w$ and $v \neq w$; and, finally, $v \ll w$ if and only if $v(S) < w(S)$ for all $S \neq \varnothing$.\footnote{We use the same notational convention to compare vectors in arbitrary Euclidean spaces.}

One can consider several bases of this linear vector space. For the study of compromise values the so-called standard base is the most useful. Formally, the \emph{standard base} is given by  the finite collection of games $\{ b_S \mid S \subseteq N \} \subset \mathbb V^N$, for every coalition $S \subseteq N$ defined by\footnote{As such the standard base corresponds to the standard base of unit vectors in the Euclidean vector space $\mathbb V^N$.}
\[
b_S (T) = \left\{
\begin{array}{ll}
	1 & \mbox{if } T=S \\
	0 & \mbox{if } T \neq S
\end{array}
\right.
\]
For any vector of payoffs $x \in \mathbb R^N$ we denote $x(S) = \sum_{i \in S} x_i$ for any coalition of players $S \in 2^N$ as the total assigned payoff to a certain coalition of players. As such, $x \in \mathbb R^N$ defines a corresponding trivial additive game $x \in \mathbb V^N$.  Hence, the game $x \in \mathbb V^N$ assigns to every coalition the sum of the individual contributions of its members; there are no cooperative effects from bringing players together in that coalition.

For every player $i \in N$, let $v_i = v ( \{ i \} )$ be their individual worth in the game $v$. We refer to the game $v$ as being \emph{zero-normalised} if $v_i =0$ for all $i \in N$. We denote the collection of all zero-normalised games by $\mathbb V^N_0 \subset \mathbb V^N$. 

For any cooperative game $v \in \mathbb V^N$ and vector $x \in \mathbb R^N$, we denote by $v+x \in \mathbb V^N$ the cooperative game defined by $(v+x) (S) = v(S) + x(S) = v(S) + \sum_{i \in S} x_i$ for every coalition $S \in 2^N$. Further, for any cooperative game $v \in \mathbb V^N$, denoting $\underline{\nu} (v) = (v_1, \ldots v_n) \in \mathbb R^N$ the vector of individual worths, we refer to the game $v- \underline{\nu} (v) \in \mathbb V^N_0$, as the \emph{zero-normalisation} of the game $v$. 

\subsection{Game properties and values}

We explore well-known properties of games that are required in our analysis. We refer to a game $v \in \mathbb V^N$ as (i) \emph{monotonic} if $v(S) \leqslant v(T)$ for all $S,T \subseteq N$ with $S \subseteq T$ and (ii) \emph{superadditive} if for all $S,T \subseteq N$ with $S \cap T = \varnothing$ it holds that $v(S \cup T) \geqslant v(S) + v(T)$.

\medskip\noindent
The \emph{marginal contribution}---also known as the ``utopia'' value \citep{Tijs1981,Tijs2008}---of an individual player $i \in N$ in the game $v \in \mathbb V^N$ is defined by their marginal or ``separable'' contribution \citep{Moulin1985} to the grand coalition in this game, i.e.,
\begin{equation} \label{eq:MCdef}
	M_i (v) = v(N) - v(N-i) .
\end{equation}
We call a cooperative game $v \in \mathbb V^N$ \emph{essential} if it holds that
\begin{equation} \label{eq:essential}
	\sum_{j \in N} v_j \leqslant v(N) \leqslant \sum_{j \in N} M_j (v)
\end{equation}
The class of essential games is denoted as $\mathbb V^N_E \subset \mathbb V^N$, which forms a linear subspace of $\mathbb V^N$.\footnote{We remark that in the literature, the term ``weakly essential'' or ``essential'' frequently pertains solely to the first inequality in (\ref{eq:essential}).}

\medskip\noindent
A game $v \in \mathbb V^N$ is called \emph{convex} if for all coalitions $S,T \in 2^N$ it holds that $v(S \cup T) + v (S \cap T) \geqslant v(S) + v(T)$ \citep{Shapley1971}. \citet{Driessen1985} pointed out that a game $v \in \mathbb V^N$ is convex if and only if for all $i \in S \subset T \colon v(S) - v(S-i) \leqslant v(T) - v(T-i)$. We denote the subclass of convex games by $\mathbb V^N_C \subset \mathbb V^N$.

\medskip\noindent
An \emph{allocation}---also known as a ``pre-imputation''---in the game $v \in \mathbb V^N$ is any point $x \in \mathbb R^N$ such that $x(N)=v(N)$.  We denote the class of all allocations for the game $v \in \mathbb V^N$ by $\mathcal A (v) = \{ x \in \mathbb R^N \mid x(N) = v(N) \} \neq \varnothing$. We emphasise that allocations can assign positive as well as negative payoffs to individual players in a game. 

An \emph{imputation} in the game $v \in \mathbb V^N$ is an allocation $x \in \mathcal A(v)$ that is individually rational in the sense that $x_i \geqslant v_i$ for every player $i \in N$. The corresponding imputation set of $v \in \mathbb V^N$ is given by $\mathcal I (v) = \{ x \in \mathcal A(v) \mid x_i \geqslant v_i$ for all $i \in N \}$. We remark that $\mathcal I(v)$ is a polytope in $\mathcal A (v)$ for any cooperative game $v \in \mathbb V^N$. Furthermore, $\mathcal I (v) \neq \varnothing$ if and only if $v(N) \geqslant \sum_{i \in N} v_i$. In particular, this is the case for essential games. 

A value on a class of games is a function that assigns an allocation to every game in this class.
\begin{definition}
	Let $\mathbb V \subseteq \mathbb V^N$ be some subclass of cooperative games on player set $N$. A \textbf{value} on $\mathbb V$ is a function $f \colon \mathbb V \to \mathbb R^N$ such that $f(v) \in \mathcal A (v)$ for every $v \in \mathbb V$.
\end{definition}
Note that, since every allocation in a game $v \in \mathbb V^N$ distributes the total worth $v(N)$ that is created in that game, a value is defined to be \emph{efficient} in the sense that $\sum_{i \in N} f_i (v) = v(N)$ for all $v \in \mathbb V$.

We remark that a value assigns an allocation to every game in the selected subclass, but not necessarily an imputation. We call a value $f \colon \mathbb V \to \mathbb R^N$ \emph{individually rational} if for every $v \in \mathbb V \colon f(v) \in \mathcal I (v)$. Hence, an individually rational value on a subclass of games explicitly assigns an imputation to every game in that subclass. We remark that many of the compromise values considered in this paper are individually rational. 

The {\em dual value} $f^* \colon \mathbb V^* \to \mathbb R^N$ of a value $f \colon \mathbb V \to \mathbb R^N$ is obtained by applying the value $f$ to the dual game $v^*$: $f^*(v)=f(v^*)$ for all $v \in \mathbb V$. Consider a class of games $\mathbb V \subseteq \mathbb V^N$ satisfying that $v \in \mathbb V$ if and only if $v^* \in \mathbb V$. A value $f$ is called {\em self-dual} on $\mathbb V$ if and only if $f(v)=f^*(v)$ for all $v \in \mathbb V$.

Finally, for some class of games $\mathbb V \subseteq \mathbb V^N$, a value $f \colon \mathbb V \to \mathbb R^N$ is \emph{translation covariant} if $f(v+x) = f(v) + x$ for any game $v \in \mathbb V$ and vector $x \in \mathbb R^N$.\footnote{This property is referred to as S-equivalence by \citet{Tijs1981}.}

\subsection{The $\tau$-value}

\citet{Tijs1981,Tijs1987} seminally introduced the quintessential compromise value, the $\tau$-value. This value is explicitly constructed as an allocation that is the balance of the marginal contribution vector $M(v)$ defined by (\ref{eq:MCdef}) as an upper bound and a lower bound that is derived from $M(v)$. In particular, \citet{Tijs1981} selected the lower bound that assigns to every player the maximal `surplus' over all coalitions it is a member of where this surplus is what is left of the worth of that coalition after all other players in the coalition received their marginal contribution. Thus, the lower bound is given by
\begin{equation} \label{eq:MinimalRights}
	m_i (v) = \max_{S \subseteq N \colon i\in S} R_i (S,v) \qquad \mbox{where } R_i (S,v) = v(S) - \sum_{j \in S - i} M_j (v)
\end{equation}
\citet{Tijs1987} refers to $m(v)$ as the ``minimal rights'' vector for the game $v$.

\paragraph{Semi-balanced games}

Following \citet{Tijs1981}, we call a game $v \in \mathbb V^N$\emph{semi-balanced} if for every coalition $S \subseteq N$ it holds that
\begin{equation}
	v(S) + \sum_{j \in S} v(N-j) \leqslant |S| \, v(N)
\end{equation}
The class of semi-balanced games is denoted by $\mathbb V^N_S \subset \mathbb V^N$.

The upper bound $M$ and the lower bound $m$ defined above form a bound pair $(m,M)$ on $\mathbb V^N_S$ in the sense that for every semi-balanced game $v \in \mathbb V^N_S \colon m(v) \leqslant M(v)$, $m(v-m(v))=0$ and $M(v-m(v)) = M(v) - m(v)$.

\citet{Tijs1981} already showed that there is a close relationship between semi-balanced games and the introduced bound pair $(m,M)$ in the sense that
\[
\mathbb V^N_S = \left\{ v \in \mathbb V^N \, \left| \, \sum_{i \in N} m_i(v) \leqslant v(N) \leqslant \sum_{i \in N} M_i(v) \, \right. \right\}
\]
\citet{Tijs1981} defined the $\tau$-value as the value on the class of semi-balanced games $\mathbb V^N_S$ that assigns to every game in this class the allocation that balances $m$ and $M$. Hence, the $\tau$-value of a semi-balanced game $v \in \mathbb V^N_S$ is  given by $\tau (v) = \lambda_v m(v) + (1- \lambda_v ) M(v)$, where $\lambda_v \in [0,1]$ is determined such that $\sum_{i \in N} \tau_i (v) = v(N)$. It can be computed that for every $v \in \mathbb V^N_S$ and $i \in N \colon$
\begin{equation}
	\tau_i (v) = m_i (v) + \frac{M_i (v) - m_i (v)}{\sum_{j \in N} \left( \, M_j (v) - m_j (v) \, \right)} \left( \, v(N) - \sum_{j \in N} m_j (v) \, \right)
\end{equation}
In this paper, we set out to develop a generalisation of the notion introduced by \citet{Tijs1981} to general compromise values based on a large class of bound pairs. This is discussed in the next section.

As an application, \citet{NunezRafels2002} show that for two-sided assignment games the $\tau$-value coincides with the fair division point \citep{Thompson1994}, the average of the buyer- and seller-optimal Core payoff vectors. 

\section{Compromise Values}

We aim to generalise the construction method on which the $\tau$-value is founded to arbitrary bound pairs on well-defined subclasses of cooperative games. In particular, we aim to find bound pairs for which this method results in meaningful compromise values. The determinants of this construction are the chosen lower and upper bounds, which are determined by well-chosen functionals on the relevant class of cooperative games. 

\subsection{Bound pairs}

We first give a definition that provides necessary requirements for a constructed bound pair consisting of a lower- and upper functional to be interpreted as bounds for the allocations.
\begin{definition} \label{def:BoundPair}
	Let $\mathbb V \subseteq \mathbb V^N$ be some class of games on player set $N$. A pair of functions $(\mu , \eta ) \colon \mathbb V \to \mathbb R^N \times \mathbb R^N$ is a \textbf{bound pair} on $\mathbb V$ if these functions satisfy the following properties:
	\begin{numm}
		\item For every $v \in \mathbb V \colon \mu (v) \leqslant \eta (v)$, and
		\item For every $v \in \mathbb V \colon v - \mu (v) \in \mathbb V$ and the following two properties hold:
		\begin{abet}
			\item For every $v \in \mathbb V \colon \mu (v - \mu (v)) =0$, and
			\item For every $v \in \mathbb V \colon \eta (v - \mu (v)) = \eta (v) - \mu (v)$.
		\end{abet}
	\end{numm}
	A bound pair $(\mu , \eta )$ is \textbf{proper} on $\mathbb V$ if there exists at least one game $v \in \mathbb V$ with $\mu (v) < \eta (v)$ and it is \textbf{strict} on $\mathbb V$ if there exists at least one game $v \in \mathbb V$ with $\mu (v) \ll \eta (v)$.
\end{definition}
The function $\mu$ can be interpreted as a lower bound for every game in the subclass $\mathbb V \subseteq \mathbb V^N$, while the function $\eta$ assigns an upper bound. The three properties introduced in the definition of a bound pair ensure that these functions behave as one might expect from lower and upper bounds. 

Property (i) imposes the natural order that an upper bound is at least as large as the lower bound. Properties (ii-a) and (ii-b) are weak versions of covariance. These properties restrict the choice of bound pairs considerably. As a consequence the corresponding compromise value is well-defined only for cooperative games for which such a combination of a lower- and upper functionals is indeed a bound pair.

The intuition of property (ii-a) is based on the interpretation of $\mu$ as a proper lower bound. For any $v \in \mathbb V$, the derived game $v - \mu (v) \in \mathbb V^N$ is defined by $(v - \mu (v)) (S) = v(S) - \sum_{j \in S} \mu_j (v)$ for coalition $S \subseteq N$. Hence, $v - \mu (v)$ assigns to a coalition the originally generated worth minus the payoff assigned by the lower bound to all members of that coalition. It is natural to require that the natural lower bound of $0 \in \mathbb R^N$ is assigned to this reduced game. 
 \begin{example} \label{expl:lowerZero}
Consider any $m^0 \in \mathbb R^N$ and the constant function $m^0 (v) = m^0$ for all $v \in \mathbb V^N$. Since $m^0 (v - m^0(v)) = m^0$, this constant function violates property (ii-a) unless $m^0 =0$. The zero lower bound $\mu^0$ with $\mu^0(v)=0$, by contrast, is a genuine lower bound: for \emph{any} upper bound $\eta$ with $\eta (v) \geqslant 0$ it forms a proper bound pair $(\mu^0 , \eta )$ on the whole space $\mathbb V^N$, even for non-linear $\eta$ (as for the PANSC value in Section 3.3.1).
\end{example}
Not every natural pair of bounds combines into a proper bound pair, however.
\begin{example}
The individual-worth lower bound $\underline{\nu} (v) = (v_1, \ldots ,v_n )$ and the marginal-contribution upper bound $M(v) = ( M_1(v), \ldots , M_n (v) \, )$ form a proper bound pair $(\underline{\nu} ,M)$ on the class of essential games $\mathbb V^N_E$. By contrast, pairing $\underline{\nu}$ with the constant upper bound $\eta^0 (v) = ( v(N), \ldots ,v(N) \, )$ fails property (ii-b) for every non zero-normalised game, since then
\[
\eta^0 ( v - \underline{\nu} (v) \, )_i = v(N) - \sum_{j \in N} v_j \;\neq\; v(N) - v_i = \eta^0 (v)_i - \underline{\nu}_i (v) .
\]
Thus even natural lower and upper bounds cannot always be combined into bound pairs.
\end{example}

\paragraph{Linear bound pairs}
	
Some of the bound pairs considered in this paper are linear functionals on the space of all games $\mathbb V^N$. This simply means that these bounds can be written as weighted sums of coalitional worths assigned in the particular game $v \in \mathbb V^N$ under consideration.

Formally, a function $f \colon \mathbb V^N \to \mathbb R$ is \emph{linear} if for every game $v \in \mathbb V^N$ and every player $i \in N \colon f_i (v) = \alpha^i \cdot v$ for some $\alpha^i \in \mathbb V^N$ and the operator ``$\cdot$'' refers to the inner product on $\mathbb V^N$ as a Euclidean vector space.\footnote{Hence, we can also write $f_i (v) = \sum_{S \subseteq N} \alpha^i_S \, v(S)$ for every $i \in N$ and $v \in \mathbb V^N$, using the notation $\alpha^i_S = \alpha^i (S)$ for every $S \subseteq N$.}
\begin{definition}
	A pair of functions $( \mu , \eta ) \colon \mathbb V^N \to \mathbb R^N \times \mathbb R^N$ is a \textbf{linear bound pair} if for every $i \in N$ there exist $\mu^i , \eta^i \in \mathbb V^N$ with $\mu_i (v) = \mu^i \cdot v$ and $\eta_i (v) = \eta^i \cdot v$ such that
	\begin{equation} \label{eq:LinearBoundDef}
		\mu^i = \sum_{j \in N} \left[ \, \sum_{T \colon j \in T} \mu^i_T \, \right] \mu^j = \sum_{j \in N} \left[ \sum_{T \colon j \in T} \eta^i_T \, \right] \mu^j
	\end{equation}
\end{definition}
The next proposition shows that the naming of a pair of linear functions as a ``bound pair'' in the definition above is justified since for every linear bound pair there is a non-empty convex cone of $\mathbb V^N$ such that this pair is indeed a bound pair.
\begin{proposition} \label{prop:LinearBoundPair}
	Every linear bound pair $(\mu , \eta )$ is a bound pair on $\mathbb V ( \mu , \eta ) = \{ v \in \mathbb V^N \mid \mu (v) \leqslant \eta (v) \} \neq \varnothing
    $, which is a non-empty closed convex cone in $\mathbb V^N$.
\end{proposition}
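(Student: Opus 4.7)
The plan is to verify in turn the three parts of the claim: non-emptiness, the closed-convex-cone structure, and the three bound-pair axioms of Definition \ref{def:BoundPair}.

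First, I would observe that because $\mu$ and $\eta$ are linear, the zero game $\mathbf{0} \in \mathbb V^N$ (with $\mathbf{0}(S)=0$ for every $S$) satisfies $\mu(\mathbf{0}) = \eta(\mathbf{0}) = 0$, so trivially $\mathbf{0} \in \mathbb V(\mu,\eta)$, giving non-emptiness. For the convex-cone structure, I would note that $\eta - \mu$ is a linear map $\mathbb V^N \to \mathbb R^N$, hence continuous on the finite-dimensional space $\mathbb V^N$, so $\mathbb V(\mu,\eta) = (\eta-\mu)^{-1}(\mathbb R^N_+)$ is closed. If $v,w \in \mathbb V(\mu,\eta)$ and $\alpha,\beta \geqslant 0$, linearity gives $\mu(\alpha v + \beta w) = \alpha\mu(v) + \beta\mu(w) \leqslant \alpha\eta(v) + \beta\eta(w) = \eta(\alpha v + \beta w)$, establishing the convex cone property.

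Next, I would verify the bound-pair axioms on $\mathbb V(\mu,\eta)$. Property (i) holds by definition of $\mathbb V(\mu,\eta)$. For (ii-a), writing $\mu_i(v) = \mu^i \cdot v = \sum_S \mu^i_S v(S)$, a direct expansion gives
\begin{equation*}
\mu_i\bigl(v - \mu(v)\bigr) = \sum_S \mu^i_S \Bigl[ v(S) - \sum_{j \in S} \mu^j \cdot v \Bigr] = \mu^i \cdot v - \sum_{j \in N} \Bigl[ \sum_{T \colon j \in T} \mu^i_T \Bigr] \mu^j \cdot v,
\end{equation*}
which vanishes precisely because of the first equality in (\ref{eq:LinearBoundDef}). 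The identical computation with $\eta^i$ in place of $\mu^i$, combined with the second equality of (\ref{eq:LinearBoundDef}), yields
\begin{equation*}
\eta_i\bigl(v - \mu(v)\bigr) = \eta^i \cdot v - \mu^i \cdot v = \eta_i(v) - \mu_i(v),
\end{equation*}
which is (ii-b). Finally, to show $v - \mu(v) \in \mathbb V(\mu,\eta)$, I combine the two identities just proved: $\mu(v - \mu(v)) = 0 \leqslant \eta(v) - \mu(v) = \eta(v - \mu(v))$, where the middle inequality is exactly $v \in \mathbb V(\mu,\eta)$.

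The only genuinely non-routine step is recognising that the two defining identities in (\ref{eq:LinearBoundDef}) are tailor-made to cancel the cross-terms that arise when substituting $v - \mu(v)$ into a linear functional; once the bookkeeping for swapping the order of summation $\sum_S \mu^i_S \sum_{j \in S}(\cdot) = \sum_{j \in N}\bigl[\sum_{T \colon j \in T} \mu^i_T\bigr](\cdot)$ is in place, everything else is direct. No non-trivial analytic or combinatorial argument is required beyond this algebraic identification.
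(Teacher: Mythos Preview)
Your proof is correct and follows essentially the same route as the paper: both arguments hinge on expanding $\mu_i(v-\mu(v))$ and $\eta_i(v-\mu(v))$ as linear functionals, swapping the order of summation, and invoking the two identities in~(\ref{eq:LinearBoundDef}) to cancel the cross terms. The only cosmetic differences are that the paper first reduces to the standard base games $b_S$ and then extends by linearity, whereas you compute directly for arbitrary $v$, and that you explicitly verify the closure condition $v-\mu(v)\in\mathbb V(\mu,\eta)$, which the paper leaves implicit.
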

\begin{proof}
	Let $(\mu , \eta )$ be a linear bound pair satisfying the properties as given. \\
	We first show that $\mathbb V (\mu , \eta )$ is a non-empty closed convex cone in $\mathbb V^N$. We note that linearity of the two functions $\mu$ and $\eta$ implies that for all $v,w \in \mathbb V (\mu , \eta )$ and every player $i \in N$, there exist $\mu^ i,\ \eta^ i \in \mathbb V^N$ such that 
    \begin{equation}\label{eqmi}
    \mu_i(v)=\mu^ i \cdot v \mbox{ and } \eta_i(v)=\eta^ i \cdot v.
    \end{equation}
    Non-emptiness follows from  $v_0 =0 \in \mathbb V (\mu , \eta )$. Moreover, (\ref{eqmi}) implies that for all $\lambda_1 , \lambda_2 \geqslant 0$ and every player $i \in N \colon$
	\begin{align*}
		 \mu_i (\lambda_1 v + \lambda_2 w) & = \mu^i \cdot (\lambda_1 v + \lambda_2 w) = \lambda_1 \mu^i \cdot v + \lambda_2 \mu^i \cdot w \\
		 & = \lambda_1 \mu_i(v) + \lambda_2 \mu_i (w) \leqslant \lambda_1 \eta_i (v) + \lambda_2 \eta_i (w) = \eta_i (\lambda_1 v + \lambda_2  w) 
	\end{align*}
	where the inequality follows from $v,w \in \mathbb V (\mu , \eta )$. This shows that $\mathbb V (\mu , \eta )$ is a closed cone in $\mathbb V^N$. Convexity follows from the above by selecting $\lambda_1 + \lambda_2 =1$.   \\
    Hence, $(\mu , \eta )$ indeed satisfies Definition \ref{def:BoundPair}(i) on the non-empty closed convex cone $\mathbb V (\mu , \eta )$. \\[1ex]
	Second, we show that $(\mu , \eta )$ satisfies Definition \ref{def:BoundPair}(ii-a) on $\mathbb V^N$. Due to the linearity of $\mu$ we only have to show the desired property for every standard base game $b_S \in \mathbb V^N$, $S \in 2^N$. \\
	Let $S \subseteq N$. Then we derive for every player $i \in N$ that
	\begin{eqnarray}\label{equb}
		\mu_i (b_S - \mu (b_S) \, ) & = \mu^i \cdot b_S - \mu^i \cdot \mu (b_S) = \mu^i_S - \sum_{T \subseteq N} \mu^i_T \left( \, \sum_{j \in T} \mu^j_S \, \right) \nonumber \\
		&= \mu^i_S - \sum_{j \in N} \left[ \, \sum_{T \subseteq N \colon j \in T} \mu^i_T \, \right]  \mu^j_S = \mu^i_S - \mu^i_S =0
	\end{eqnarray}
	where $\mu^ i$ is given by (\ref{eqmi}) and the fourth equality follows from (\ref{eq:LinearBoundDef}). By writing any $v \in \mathbb V (\mu , \eta )$ as $v = \sum_{S \subseteq N} v(S) \, b_S$ we now derive that for every player $i \in N \colon$
	\[
		\mu_i (v - \mu (v)) = \mu^i \cdot \left( \sum_{S \subseteq N} v(S) \, b_S - \sum_{S \subseteq N} v(S) \, \mu(b_S) \, \right)  = \sum_{S \subseteq N} v(S) \, \mu^i \cdot (b_S - \mu (b_S) \, ) =0 
	\]
	by (\ref{equb}). Hence, property (ii-a) of Definition \ref{def:BoundPair} is satisfied on $\mathbb V^N$.
	\\[1ex]
	Third, to show that $(\mu , \eta )$ satisfies Definition \ref{def:BoundPair}(ii-b) on $\mathbb V^N$, we again can restrict ourselves to checking this property for all base games. Take $b_S \in \mathbb V^N$ for some $S \in 2^N$. Now, for every player $i \in N \colon$
	\begin{align*}
		\eta_i (b_S - \mu (b_S) \, ) & = \eta^i \cdot b_S - \eta^i \cdot \mu(b_S) =\eta^i_S - \sum_{T \subseteq N} \eta^i_T \left( \sum_{j \in T} \mu^j_S \right) \\
		& =\eta^i_S - \sum_{j \in N} \left[ \, \sum_{T \subseteq N \colon j \in T} \eta^i_T \, \right]  \mu^j_S = \eta^i_S - \mu^i_S = \eta_i (b_S) - \mu_i (b_S)
	\end{align*}
	using (\ref{eq:LinearBoundDef}). This shows property (ii-b) of Definition \ref{def:BoundPair} on $\mathbb V^N$. \\[1ex]
	Therefore, we have shown that the pair $(\mu , \eta )$ satisfies Definition \ref{def:BoundPair} and is indeed a bound pair on the non-empty closed convex cone $\mathbb V ( \mu , \eta )$.
\end{proof}

\paragraph{Bound balanced games}

For $(\mu , \eta ) \colon \mathbb V \to \mathbb R^N \times \mathbb R^N$ being a bound pair on some class of games $\mathbb V \subseteq \mathbb V^N$,  and $v \in \mathbb V$, we define
\begin{equation}
	\mathcal Q (v; \mu , \eta ) = \{ x \in \mathcal A (v) \mid \mu (v) \leqslant x \leqslant \eta (v) \, \}
\end{equation}
as the class of allocations of $v$ that are bound by $\mu (v)$ and $\eta (v)$. The following proposition clarifies further when bound pairs are meaningful in the sense that this class of allocations is nonempty.
\begin{proposition} \label{prop:BoundGame}
	Let $(\mu , \eta ) \colon \mathbb V \to \mathbb R^N \times \mathbb R^N$ be a bound pair on $\mathbb V \subseteq \mathbb V^N$ and let $v \in \mathbb V$. For the game $v$ the set of $( \mu (v) , \eta (v))$-bound allocations is non-empty if and only if the sums of the lower- and upper bound payoffs are lower- and upper bounds for $v(N)$, i.e.,
	\begin{equation} \label{eq:Q}
		\mathcal Q (v; \mu , \eta ) \neq \varnothing \quad \mbox{if and only if } \quad \sum_{i \in N} \mu_i (v) \leqslant v(N) \leqslant \sum_{i \in N} \eta_i (v) .
	\end{equation}
\end{proposition}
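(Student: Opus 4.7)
The plan is to prove the biconditional directly, with the forward direction being essentially a summation and the reverse direction constructive via a convex combination of $\mu(v)$ and $\eta(v)$.

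For the forward direction, suppose $x \in \mathcal{Q}(v;\mu,\eta)$. By definition, $\mu_i(v) \leqslant x_i \leqslant \eta_i(v)$ for all $i \in N$ and $\sum_{i \in N} x_i = v(N)$. Summing the componentwise inequalities over $N$ and substituting $v(N)$ for $\sum_i x_i$ immediately yields $\sum_i \mu_i(v) \leqslant v(N) \leqslant \sum_i \eta_i(v)$.

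For the reverse direction, assume the two sum inequalities hold. I will exhibit an explicit element of $\mathcal{Q}(v;\mu,\eta)$ as a convex combination of the two bounds. Since $(\mu,\eta)$ is a bound pair, Definition~\ref{def:BoundPair}(i) gives $\mu(v) \leqslant \eta(v)$, so for any $\lambda \in [0,1]$ the candidate $x^\lambda := \lambda \mu(v) + (1-\lambda)\eta(v)$ automatically satisfies $\mu(v) \leqslant x^\lambda \leqslant \eta(v)$. The remaining task is to choose $\lambda$ so that $x^\lambda(N) = v(N)$. In the non-degenerate case $\sum_i \mu_i(v) < \sum_i \eta_i(v)$, setting
\[
\lambda = \frac{\sum_i \eta_i(v) - v(N)}{\sum_i \eta_i(v) - \sum_i \mu_i(v)}
\]
solves the efficiency equation, and the hypothesised sandwich of $v(N)$ between the two sums guarantees $\lambda \in [0,1]$.

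The only subtlety is the degenerate case $\sum_i \mu_i(v) = \sum_i \eta_i(v)$. Here the hypothesised inequalities force $v(N)$ to coincide with this common sum, and moreover $\mu(v) \leqslant \eta(v)$ combined with equal sums forces $\mu(v) = \eta(v)$ componentwise, so $x := \mu(v) = \eta(v)$ trivially lies in $\mathcal{Q}(v;\mu,\eta)$. This degenerate case is thus the only step requiring any care, but it is handled by this componentwise argument. Note that this proof uses only Definition~\ref{def:BoundPair}(i); properties (ii-a) and (ii-b) play no role here, which is as expected since the claim concerns only the geometry of the segment between $\mu(v)$ and $\eta(v)$.
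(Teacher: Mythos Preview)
Your proof is correct and follows essentially the same approach as the paper: the forward direction is identical, and for the reverse direction the paper simply asserts that the existence of such an $x$ is ``obvious,'' whereas you supply the explicit convex combination (including the degenerate case) that makes this rigorous. Your observation that only property~(i) of Definition~\ref{def:BoundPair} is needed is also accurate.
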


\begin{proof}
	Let $(\mu , \eta )$ be a bound pair on $\mathbb V \subseteq \mathbb V^N$ and let $v \in \mathbb V$.
	\\[1ex]
	{\bf (Only if)} Assume that $\mathcal Q (v; \mu , \eta ) \neq \varnothing$. Let $x \in \mathcal Q (v; \mu , \eta )$. Then $\mu (v) \leqslant x \leqslant \eta (v)$. Moreover, $x \in \mathcal A(v)$ implies that $x(N) = v(N)$. Therefore, $\sum_{i \in N} \mu_i (v) \leqslant x(N) = v(N) \leqslant \sum_{i \in N} \eta_i (v)$.
	\\[1ex]
	{\bf (If)} Assume that $\sum_{i \in N} \mu_i (v) \leqslant v(N) \leqslant \sum_{i \in N} \eta_i (v)$. By Definition \ref{def:BoundPair}(i), $\mu (v) \leqslant \eta (v)$. \\ Then it is obvious that there exists some $x \in \mathbb R^N$ such that (i) $\sum_{i \in N} x_i = v(N)$ and (ii) $\mu (v) \leqslant x \leqslant \eta (v)$. Clearly, $x \in \mathcal Q(v;\mu,\eta)$ and, thus, $\mathcal Q(v;\mu,\eta) \neq \varnothing$.
	\\[1ex] 
	Together, these show (\ref{eq:Q}).
\end{proof}

\bigskip\noindent
Proposition \ref{prop:BoundGame} allows the introduction of the subclass of cooperative games for which the set of bound allocations is non-empty for a given bound pair. 
\begin{definition}
	Let $(\mu , \eta )$ be a bound pair on $\mathbb V \subseteq \mathbb V^N$. The subclass of \textbf{$(\mu , \eta )$-balanced games} is defined by
	\begin{equation}
		\mathbb B ( \mu , \eta ) = \left\{ v \in \mathbb V \, \left| \, \sum_{i \in N} \mu_i (v) \leqslant v(N) \leqslant \sum_{i \in N} \eta_i (v) \right. \right\}
	\end{equation}
	Equivalently, by Proposition \ref{prop:BoundGame}, the class of $( \mu , \eta )$-balanced games consists of those games $v \in \mathbb V$ for which $\mathcal Q (v; \mu , \eta ) \neq \varnothing$. 
\end{definition}
The introduction of the subspace of $(\mu , \eta )$-balanced games gives rise to the question whether compromise points define a value on this class that assigns the corresponding compromise point to each game. This corresponds to finding a weight $\lambda \in [0,1]$ such that $\lambda \mu (v) + (1- \lambda ) \eta (v) \in \mathcal Q (v; \mu , \eta )$. The next proposition provides formulations for the resulting \emph{compromise value allocations}.
\begin{proposition} \label{prop:CompromisePoint}
	Let $(\mu , \eta )$ be a bound pair on $\mathbb V \subseteq \mathbb V^N$ such that the corresponding subclass of $(\mu , \eta )$-balanced games $\mathbb B ( \mu , \eta ) \subseteq \mathbb V$ is non-empty. Then for every $(\mu , \eta )$-balanced game $v \in \mathbb B ( \mu , \eta ) \colon$
	\begin{numm}
		\item If $\mu (v) < \eta (v)$, it holds that
		\begin{equation} \label{eq:CompromiseValueDef}
			\gamma (v; \mu , \eta ) = \frac{v(N) - \sum_{i \in N} \mu_i (v)}{\sum_{i \in N} ( \eta_i (v) - \mu_i (v) \, )} \, \eta (v) + \frac{\sum_{i \in N} \eta_i (v) - v(N)}{\sum_{i \in N} ( \eta_i (v) - \mu_i (v) \, )} \, \mu (v) \in \mathcal Q ( v ; \mu (v) , \eta (v) \, )
		\end{equation}
		and,
		\item if $\mu (v) = \eta (v)$, it holds that
			\begin{equation}
				\gamma (v; \mu , \eta ) = \mu (v) = \eta (v) \in \mathcal Q ( v ; \mu (v) , \eta (v) \, )
			\end{equation}
	\end{numm}
	Therefore, the map $\gamma ( \cdot ; \mu , \eta ) \colon \mathbb B ( \mu , \eta ) \to \mathbb R^N$ defines a value on $\mathbb B ( \mu , \eta )$, satisfying $\sum_{i \in N} \gamma_i (v; \mu, \eta ) = v(N)$ for every $v \in \mathbb B (\mu, \eta )$.
\end{proposition}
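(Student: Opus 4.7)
The plan is to split the argument into the two cases indicated in the statement and verify, for each case, that $\gamma(v;\mu,\eta)$ is well-defined, satisfies $\mu(v) \leqslant \gamma(v;\mu,\eta) \leqslant \eta(v)$ coordinatewise, and satisfies the efficiency identity $\sum_{i \in N} \gamma_i(v;\mu,\eta) = v(N)$. Since $\mathcal{Q}(v;\mu,\eta)$ is by definition the intersection of $\mathcal{A}(v)$ with the box $[\mu(v), \eta(v)]$, these two facts together place $\gamma(v;\mu,\eta)$ in $\mathcal{Q}(v;\mu,\eta)$ and simultaneously prove the final ``value'' claim.

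Case (ii) is essentially immediate: if $\mu(v) = \eta(v)$, then $v \in \mathbb{B}(\mu,\eta)$ forces $\sum_{i \in N} \mu_i(v) \leqslant v(N) \leqslant \sum_{i \in N} \eta_i(v) = \sum_{i \in N} \mu_i(v)$, so $\sum_{i \in N} \mu_i(v) = v(N)$; the vector $\mu(v) = \eta(v)$ then trivially lies in $\mathcal{Q}(v;\mu,\eta)$. For case (i), I would first check that the denominator $\sum_{i \in N}(\eta_i(v) - \mu_i(v))$ appearing in \eqref{eq:CompromiseValueDef} is strictly positive: by Definition \ref{def:BoundPair}(i) we have $\mu(v) \leqslant \eta(v)$ and the assumption $\mu(v) < \eta(v)$ yields at least one coordinate with strict inequality, so the sum is positive and the formula is well-defined.

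Next I would rewrite $\gamma(v;\mu,\eta)$ as the convex combination $\lambda_v \, \mu(v) + (1-\lambda_v)\, \eta(v)$ with
\[
\lambda_v = \frac{\sum_{i \in N} \eta_i(v) - v(N)}{\sum_{i \in N}(\eta_i(v) - \mu_i(v))}.
\]
The hypothesis $v \in \mathbb{B}(\mu,\eta)$ gives $\sum_{i \in N} \mu_i(v) \leqslant v(N) \leqslant \sum_{i \in N}\eta_i(v)$, hence both the numerator of $\lambda_v$ and the corresponding numerator of $1-\lambda_v$ are non-negative; thus $\lambda_v \in [0,1]$. Because $\mu(v) \leqslant \eta(v)$ coordinatewise, the convex combination lies coordinatewise in $[\mu(v),\eta(v)]$, giving the bound condition. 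Efficiency then follows from a direct computation: $\sum_{i \in N} \gamma_i(v;\mu,\eta) = \lambda_v \sum_{i \in N}\mu_i(v) + (1-\lambda_v)\sum_{i \in N}\eta_i(v)$, and substituting the expression for $\lambda_v$ collapses this to $v(N)$.

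There is no real obstacle here; the statement is essentially a well-posedness and bookkeeping claim, and the only point that requires mild care is ensuring the denominator in \eqref{eq:CompromiseValueDef} does not vanish in case (i), which is precisely why the two cases are separated. Once cases (i) and (ii) are combined, $\gamma(\,\cdot\,;\mu,\eta)$ assigns to each $v \in \mathbb{B}(\mu,\eta)$ an element of $\mathcal{Q}(v;\mu,\eta) \subseteq \mathcal{A}(v)$, so it is a value on $\mathbb{B}(\mu,\eta)$ in the sense of the earlier definition, completing the proof.
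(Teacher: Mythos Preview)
Your proposal is correct. The paper itself declares the proof trivial and omits it entirely, and your argument is exactly the routine verification one would supply: handle the degenerate case $\mu(v)=\eta(v)$ by squeezing, and in the nondegenerate case observe that the formula is a convex combination of $\mu(v)$ and $\eta(v)$ with weight $\lambda_v\in[0,1]$, which immediately gives both the box constraint and efficiency.
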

The proof of Proposition \ref{prop:CompromisePoint} is trivial and is, therefore, omitted. We refer to the value $\gamma ( \cdot ; \mu , \eta )$ on $\mathbb B ( \mu , \eta )$ introduced in Proposition \ref{prop:CompromisePoint} as the \emph{$(\mu , \eta )$-compromise value}.

\subsection{An axiomatic characterisation of compromise values}

We continue our discussion of compromise values by constructing an axiomatic characterisation of \emph{all} compromise values based on bound pairs as introduced in Definition \ref{def:BoundPair}.

In particular, we demonstrate that the axiomatisation of the $\tau$-value seminally developed by \citet{Tijs1981} can be extended to any arbitrary compromise value. The subsequent theorem provides a complete characterisation of compromise values in terms of their associated bound pair. This type of axiomatisation has been developed for various compromise values and classes in the literature, which is further discussed in the subsequent sections of this paper.
\begin{theorem} \label{thm:MainCharacterisation}
	Let $(\mu , \eta )$ be a bound pair on $\mathbb V \subseteq \mathbb V^N$ and let $\mathbb B ( \mu , \eta ) \subseteq \mathbb V$ be the corresponding subclass of $(\mu , \eta )$-balanced games. \\
	Then the $(\mu , \eta )$-compromise value $\gamma ( \cdot ; \mu , \eta)$ is the unique value $f \colon \mathbb B ( \mu , \eta ) \to \mathbb R^N$ which satisfies the following two properties:
	\begin{numm}
		\item \textbf{Minimal rights property:} \\ For every $v \in \mathbb B ( \mu , \eta ) \colon f(v) = f(v- \mu (v) ) + \mu (v)$.
		\item \textbf{Restricted proportionality:} \\ For every game $v \in \mathbb B (\mu , \eta)$ with $\mu (v)=0$ there exists some $\lambda_v \in \mathbb R$ such that $f(v) = \lambda_v \, \eta (v)$.
	\end{numm}
\end{theorem}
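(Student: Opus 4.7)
The plan is to verify that the compromise value $\gamma(\,\cdot\,;\mu,\eta)$ satisfies both properties, and then to show that these two properties together with efficiency pin down a unique value on $\mathbb{B}(\mu,\eta)$.

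For existence, I would first check that for any $v \in \mathbb{B}(\mu,\eta)$ the reduced game $w := v-\mu(v)$ also lies in $\mathbb{B}(\mu,\eta)$: properties (ii-a) and (ii-b) of Definition \ref{def:BoundPair} give $\mu(w)=0$ and $\eta(w)=\eta(v)-\mu(v)$, and summing the bound-balanced inequalities for $v$ yields $0 \leqslant v(N)-\sum_i \mu_i(v) \leqslant \sum_i \eta_i(v)-\sum_i \mu_i(v)$, which is exactly the bound-balanced inequality for $w$. Minimal rights then follows by direct substitution into (\ref{eq:CompromiseValueDef}): the ratios defining $\gamma(w;\mu,\eta)$ coincide with those defining $\gamma(v;\mu,\eta)$, and $\gamma(w;\mu,\eta)+\mu(v)$ simplifies to $\gamma(v;\mu,\eta)$ in both the strict case $\mu(v)<\eta(v)$ and the degenerate case $\mu(v)=\eta(v)$. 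Restricted proportionality is immediate: if $\mu(v)=0$, formula (\ref{eq:CompromiseValueDef}) collapses to $\gamma(v;\mu,\eta) = \lambda_v\,\eta(v)$ with $\lambda_v = v(N)/\sum_i \eta_i(v)$ when $\eta(v)\neq 0$, and to $\gamma(v;\mu,\eta) = 0$ (so trivially of the form $\lambda \eta(v)$) when $\eta(v)=0$, in which case bound-balancedness forces $v(N)=0$ as well.

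For uniqueness, suppose $f \colon \mathbb{B}(\mu,\eta) \to \mathbb{R}^N$ satisfies (i) and (ii), and fix $v \in \mathbb{B}(\mu,\eta)$. The minimal rights property applied to $v$ reduces the task to computing $f(w)$ with $w = v-\mu(v)$, which has $\mu(w)=0$. Restricted proportionality then gives $f(w) = \lambda_w\,\eta(w)$ for some scalar $\lambda_w$, and efficiency $\sum_i f_i(w) = w(N) = v(N)-\sum_i \mu_i(v)$ determines $\lambda_w$ uniquely whenever $\sum_i \eta_i(w) = \sum_i(\eta_i(v)-\mu_i(v)) \neq 0$. Substituting back gives $f(v) = \lambda_w(\eta(v)-\mu(v)) + \mu(v)$, which is easily checked to agree with (\ref{eq:CompromiseValueDef}).

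The main obstacle, albeit a small one, is the degenerate case where $\sum_i(\eta_i(v)-\mu_i(v))=0$: here $\lambda_w$ is not determined by efficiency, so restricted proportionality alone is vacuous. I would handle this by noting that $\mu(v) \leqslant \eta(v)$ componentwise together with equality of the sums forces $\mu(v)=\eta(v)$, hence $\eta(w)=0$, hence $f(w)=\lambda_w\cdot 0 = 0$ regardless of $\lambda_w$; efficiency is then automatic since bound-balancedness pins $w(N)=0$. Minimal rights then yields $f(v) = \mu(v) = \gamma(v;\mu,\eta)$, matching the second case of Proposition \ref{prop:CompromisePoint}. This completes the characterisation on all of $\mathbb{B}(\mu,\eta)$.
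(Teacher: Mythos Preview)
Your proposal is correct and follows essentially the same approach as the paper's proof: verify both axioms directly from formula (\ref{eq:CompromiseValueDef}), then for uniqueness apply minimal rights to pass to $w=v-\mu(v)$, invoke restricted proportionality on $w$, and let efficiency fix the scalar, handling the degenerate case $\mu(v)=\eta(v)$ separately. If anything, your treatment is slightly more careful in that you explicitly check $w\in\mathbb B(\mu,\eta)$, which the paper's proof uses without stating.
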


\begin{proof}
	Let $(\mu , \eta )$ be a bound pair on $\mathbb V \subseteq \mathbb V^N$ and let $\mathbb B ( \mu , \eta ) \subseteq \mathbb V$ be the corresponding subclass of $(\mu , \eta )$-balanced games.
	\\[1ex]
	We show that $\gamma ( \cdot ; \mu , \eta )$ satisfies the properties stated in the assertion. First, note that $\gamma ( \cdot ; \mu , \eta )$ is a value, since by definition it is efficient, i.e., $\sum_{i \in N} \gamma_i ( v ; \mu , \eta ) = v(N)$ for every $v \in \mathbb B ( \mu , \eta )$. \\
	Second, we show that the $(\mu , \eta )$-compromise value $\gamma ( \cdot ; \mu , \eta)$ satisfies the two properties.
	First, with reference to Proposition \ref{prop:CompromisePoint}, $\gamma ( \cdot ; \mu , \eta )$ satisfies restricted proportionality on $\mathbb B ( \mu , \eta ) \subseteq \mathbb V^N$, since  $\mu (v) =0$ for $v \in \mathbb B ( \mu , \eta )$ implies that
			\[
				\gamma (v ; \mu , \eta ) = \frac{v(N)}{\sum_{i \in N} \eta_i (v)} \, \eta (v)
			\]
		Next, to show that $\gamma ( \cdot ; \mu , \eta )$ satisfies the minimal rights property, let $v \in \mathbb B ( \mu , \eta )$. Then by Definition \ref{def:BoundPair}, $\mu (v - \mu (v))=0$ and $\eta (v- \mu (v))= \eta (v) - \mu (v)$. We distinguish two cases: \\[1ex]
		First, if $\mu (v) = \eta (v)$, then $\eta (v - \mu (v)) =0 = \mu (v - \mu (v))$. Hence, by definition of $\gamma$, for some $\lambda_1 \in [0,1] \colon$
			\[
				\gamma (v - \mu (v) ; \mu , \eta ) = \lambda_1 \eta (v- \mu (v)) + (1- \lambda_1 ) \mu (v - \mu (v)) =0, 
			\]
	and, thus, $\gamma (v; \mu , \eta ) = \mu (v) = \gamma (v - \mu (v) ; \mu , \eta ) + \mu (v)$. \\[1ex]
	Next, if $\mu (v) < \eta (v)$, we have from $\mu (v - \mu (v)) =0$ and $\eta (v - \mu (v)) = \eta (v) - \mu (v)$ by (\ref{eq:CompromiseValueDef}) that
	\begin{align*}
		\gamma (v - \mu (v) ; \mu , \eta ) & = \frac{v(N) - \sum_{i \in N} \mu_i (v) - \sum_{i \in N} \mu_i (v - \mu (v))}{\sum_{i \in N} \left( \eta_i (v - \mu(v)) - \mu_i (v- \mu(v)) \, \right)} \eta (v - \mu (v)) \\[1ex]
		& = \frac{v(N) - \sum_{i \in N} \mu_i (v)}{\sum_{i \in N} \left( \eta_i (v) - \mu_i (v)) \, \right)} \, [ \eta (v) - \mu (v) \, ]
	\end{align*}
	Hence,
	\begin{align*}
		\gamma (v - \mu (v) ; \mu , \eta ) + \mu (v) & = \frac{v(N) - \sum_{i \in N} \mu_i (v)}{\sum_{i \in N} \left( \eta_i (v) - \mu_i (v)) \, \right)} \, \eta(v) + \left[ 1- \frac{v(N) - \sum_{i \in N} \mu_i (v)}{\sum_{i \in N} \left( \eta_i (v) - \mu_i (v)) \, \right)} \, \right] \, \mu (v) \\[1ex]
		& = \frac{v(N) - \sum_{i \in N} \mu_i (v)}{\sum_{i \in N} \left( \eta_i (v) - \mu_i (v)) \, \right)} \, \eta(v) - \frac{v(N) - \sum_{i \in N} \eta_i (v)}{\sum_{i \in N} \left( \eta_i (v) - \mu_i (v)) \, \right)} \, \mu (v) \\[1ex]
		& = \gamma (v ; \mu , \eta ) .
	\end{align*}
	This shows that the compromise value $\gamma$ satisfies the minimal rights property. \\
	Next, we show that if a value $f \colon \mathbb B ( \mu , \eta ) \to \mathbb R^N$ satisfies the two stated properties, it necessarily has to be the corresponding compromise value. Combining the two properties for any $v \in \mathbb B (\mu , \eta )$ we derive that
	\[
	f(v) = f(v - \mu (v)) + \mu (v) = \lambda \, \eta (v - \mu (v)) + \mu (v) 
	\]
	for some $\lambda \in \mathbb R$ where the first equality follows from the minimal rights property and the second equality from  restricted proportionality. Using $\eta (v - \mu (v)) = \eta (v) - \mu (v)$ we subsequently conclude that $f(v) = \lambda \, \eta (v) + (1- \lambda ) \mu (v)$. \\
	If $\mu (v) = \eta (v)$, it follows immediately that $f(v) = \mu (v) = \gamma (v ; \mu , \eta )$. \\
	This leaves the case that $\mu (v) < \eta (v)$. Using the efficiency of $f (v)$, it holds that $\sum_{i \in N} f_i (v) = v(N)$. Together with $\sum_{i \in N} \mu_i (v) \leqslant v(N) = \sum_{i \in N} f_i (v) \leqslant \sum_{i \in N} \eta_i (v)$ and $\mu (v) < \eta (v)$ it follows that
	\[
		\sum_{i \in N} f_i(v) = \lambda \sum_{i \in N} \eta_i(v) + (1-\lambda)\sum_{i \in N} \mu_i(v) = v(N) 
	\]
	implying that
	\[
		\lambda \left(\sum_{i \in N} (\eta_i(v) - \mu_i(v) \right) = v(N) - \sum_{i \in N} \mu_i(v) 
	\]
	Hence,
	\[
		\lambda = \frac{v(N) - \sum_{i \in N} \mu_i (v)}{\sum_{i \in N} ( \eta_i (v) - \mu_i (v))}
	\]
	showing that $f(v) = \gamma (v ; \mu , \eta )$.
\end{proof}

\subsection{Three illustrations of compromise values}

To elucidate and refine the concepts explored in the preceding discussion, we examine three compromise values. Notably, the first two values, PANSC and Gately values, have been previously examined in the literature. In contrast, the KM-value represents a novel contribution, emerging from the integration of an upper and a lower bound, applicable over the entire space of all games $\mathbb{V}^N$.

\subsubsection{The PANSC value}

Considering the zero vector as the chosen lower bound and the marginal contributions vector as the chosen upper bound, the resulting compromise value is the \emph{Proportional Allocation of Non-Separable Contributions} (PANSC) value which has been studied extensively by \citet{PANSC2023}. 

The PANSC value assigns to every player a payoff that is proportional to the player's marginal contribution to the total wealth generated in the game. Formally, for the game $v \in \mathbb V^N$ with $\sum_{j \in N} M_j (v) \neq 0$ and player $i \in N$, the PANSC value is defined by
\begin{equation}
	\mathrm{PANSC}_i (v) = \frac{M_i (v)}{\sum_{j \in N} M_j (v)} \, v(N) .
\end{equation}
The PANSC value corresponds to the $(\mu^0 , M )$-compromise value, where $\mu^0 (v)=0$ and $M (v) = \left( M_1 (v), \ldots , M_n (v) \, \right)$ for every $v \in \mathbb B (\mu^0 ,M)$ with
\begin{equation}
	\mathbb B (\mu^0 ,M) = \left\{ v \in \mathbb V^N \, \left| \, 0 \leqslant v(N) \leqslant \sum_{j \in N} M_j (v) \, \right. \right\}
\end{equation}
We remark that this class of cooperative games includes the set of non-negative essential games. 

It is easy to show that $\mathrm{PANSC} \colon \mathbb B (\mu^0 , M ) \to \mathbb R^N$ is the unique value $f$ on $\mathbb B (\mu^0 , M )$ such that for every game $v \in \mathbb B (\mu^0 ,M)$ there exists some $ \lambda_v \geqslant 0$ with $f (v) = \lambda_v \, M (v)$.

\subsubsection{The Gately value}

Another example of a compromise value is the \emph{Gately value} initially proposed by \citet{Gately1974} and further developed by \citet{Gately1976,Gately1978,Gately2019} and \citet{GillesMallozzi2024}.

The Gately value is the compromise value based on the bound pair $(\underline{\nu} , M)$, where, for every $v \in \mathbb V^N$, $\underline{\nu} (v) = ( v_1, \ldots , v_n)$ is the vector of individual worths and $M(v) = (M_1 (v) , \ldots , M_n (v))$ is the vector of marginal contributions. The pair $(\underline{\nu} , M )$ indeed forms a bound pair\footnote{It is easy to check the three required properties of a bound pair for $(\underline{\nu} , M )$.} on the generated class of $(\underline{\nu} , M)$-balanced games $\mathbb B ( \underline{\nu} , M )$, which is exactly the class of essential games $\mathbb V^N_E$.

The corresponding $(\underline{\nu} ,M)$-compromise value on $\mathbb B ( \underline{\nu} , M ) = \mathbb V^N_E$ is the Gately value given by
\begin{equation} \label{eq:GatelyDef}
	g_i (v) = v_i + \frac{M_i (v) - v_i}{\sum_{j \in N} \left( \, M_j (v) - v_j \, \right)} \left( \, v(N) - \sum_{j \in N} v_j \, \right)
\end{equation}
for every essential game $v \in \mathbb V^N_E$ and player $i \in N$.\footnote{With reference to the discussion in Sections 4 and 5 in this paper, we remark that the Gately value for games outside the class $\mathbb B ( \underline{\nu} , M ) = \mathbb V^N_E$ cannot be constructed through the methods discussed there. Hence, the Gately value does not result from either the lower bound $\underline{\nu}$ or the upper bound $M$.}

The Gately value has some interesting properties. First, the Gately value is the value at which the so-called propensities to disrupt for all players are minimal \citep{Gately1974}. This refers to the interpretation that the Gately value is the equilibrium outcome of a bargaining process over the allocation of the generated worths in the cooperative game.

Second, the Gately value is \emph{self-dual} in the sense that the Gately value of the dual game is equal to the Gately value of the original game \citep[Proposition 3.9]{GillesMallozzi2024}. Here we remark that the dual of an essential game is essential as well. 

Third, the Gately value is in the Core of every three player cooperative game as shown by \citet[Theorem 4.2]{GillesMallozzi2024}. For games with more than three players, this might not be the case, showing that in general the relationship between compromise values and the Core is undetermined.

\subsubsection{The KM-value}

It is an interesting question whether a compromise value can be constructed on the whole class of cooperative games $\mathbb V^N$. This has been investigated by \citet{Brink1994a}, who constructed a value based on a lower bound introduced by \citet{Kikuta1980} and an upper bound that was already considered by \citet{Milnor1952}.  Both of these bounds were originally considered for the Core only.

Formally, the Kikuta lower bound $\underline{M} \colon \mathbb V^N \to \mathbb R^N$ assigns to every player $i \in N$ her minimal marginal contribution in a game $v \in \mathbb V^N$, defined as
\begin{equation}
	\underline{M}_i (v) = \min_{S \subseteq N \colon i \in S} \left( \, v(S) - v(S-i) \, \right)
\end{equation}
Similarly, the Milnor upper bound $\overline{M} \colon \mathbb V^N \to \mathbb R^N$ assigns to every player $i \in N$ his maximal marginal contribution in a game $v \in \mathbb V^N$, defined as
\begin{equation}\label{equm}
	\overline{M}_i (v) = \max_{S \subseteq N \colon i \in S} \left( \, v(S) - v(S-i) \, \right)
\end{equation}
The next proposition summarises the properties of these bounds and introduces the \emph{KM-value} as the $(\underline{M} , \overline{M} )$-compromise value.
\begin{proposition}
	The pair $(\underline{M} , \overline{M} )$ forms a bound pair on $\mathbb B (\underline{M} , \overline{M} ) = \mathbb V^N$.
\end{proposition}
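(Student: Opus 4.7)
The plan is to verify the three defining properties of a bound pair from Definition \ref{def:BoundPair} for the pair $(\underline{M}, \overline{M})$ on all of $\mathbb V^N$, and then separately to check that every game $v \in \mathbb V^N$ satisfies the $(\underline{M},\overline{M})$-balancedness inequality $\sum_i \underline{M}_i(v) \leqslant v(N) \leqslant \sum_i \overline{M}_i(v)$, so that $\mathbb B(\underline{M},\overline{M}) = \mathbb V^N$.

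Property (i) is immediate, since for each player $i$ and each coalition $S$ containing $i$, the expression $v(S) - v(S-i)$ is a single real number, and a minimum of a set is always at most its maximum; hence $\underline{M}_i(v) \leqslant \overline{M}_i(v)$ for every $v \in \mathbb V^N$. For properties (ii-a) and (ii-b) the key observation is that translating a game by a constant vector shifts every marginal contribution by exactly that constant. More precisely, for any $v \in \mathbb V^N$, any $i \in N$, and any $S \ni i$,
\[
(v - \underline{M}(v))(S) - (v - \underline{M}(v))(S-i) = v(S) - v(S-i) - \underline{M}_i(v),
\]
because the terms $\underline{M}_j(v)$ for $j \in S - i$ cancel. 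Taking the minimum over $S \ni i$ on both sides yields $\underline{M}_i(v - \underline{M}(v)) = \underline{M}_i(v) - \underline{M}_i(v) = 0$, and taking the maximum yields $\overline{M}_i(v - \underline{M}(v)) = \overline{M}_i(v) - \underline{M}_i(v)$, which are precisely (ii-a) and (ii-b).

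For the bound-balancedness, I would use a telescoping argument along any fixed ordering of the players. Fix any permutation $(i_1,\ldots,i_n)$ of $N$ and set $S_k = \{i_1,\ldots,i_k\}$, so that $S_0 = \varnothing$ and $S_n = N$. Then
\[
v(N) = \sum_{k=1}^n \bigl( v(S_k) - v(S_k - i_k) \bigr).
\]
Each summand is the marginal contribution of player $i_k$ along one specific coalition containing $i_k$, so by the definitions of $\underline M$ and $\overline M$ it satisfies $\underline{M}_{i_k}(v) \leqslant v(S_k) - v(S_k - i_k) \leqslant \overline{M}_{i_k}(v)$. Summing over $k$ delivers the required double inequality $\sum_{i \in N} \underline{M}_i(v) \leqslant v(N) \leqslant \sum_{i \in N} \overline{M}_i(v)$, and this holds for every $v \in \mathbb V^N$. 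Combined with the bound-pair verification above and Proposition \ref{prop:BoundGame}, this yields $\mathbb B(\underline{M},\overline{M}) = \mathbb V^N$.

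There is no real obstacle here; everything reduces to the cancellation identity in the displayed equation above and the telescoping sum. The only thing worth being careful about is checking that the identity is valid uniformly for all $S \ni i$, so that the shift $-\underline{M}_i(v)$ can be pulled outside the min and the max without altering the arg-min/arg-max structure. Once this is done, properties (i), (ii-a), (ii-b), and bound-balancedness all follow in one line each.
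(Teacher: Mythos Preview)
Your proof is correct and follows essentially the same approach as the paper: the cancellation identity $(v-\underline{M}(v))(S)-(v-\underline{M}(v))(S-i)=v(S)-v(S-i)-\underline{M}_i(v)$ is used to verify (ii-a) and (ii-b), and a telescoping sum along an ordering of $N$ establishes bound-balancedness. The only cosmetic difference is that the paper fixes the natural ordering $N_k=\{1,\ldots,k\}$ while you allow an arbitrary permutation, which changes nothing.
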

\begin{proof}
	To show that $(\underline{M} , \overline{M} )$ is a bound pair on $\mathbb V^N$, we show the conditions (i), (ii-a) and (ii-b) of Definition \ref{def:BoundPair} over the whole game space $\mathbb V^N$. 
	\begin{numm}
		\item It is immediately clear that for every $v \in \mathbb V^N \colon \underline{M} (v) \leqslant \overline{M} (v)$.
		\item Next, let $v \in \mathbb V^N$ and let $v' = v - \underline{M} (v)$. To check that $\underline{M} ( v - \underline{M} (v) \, ) = \underline{M} (v') =0$ for $i \in N$, let $S_i \in 2^N$ be such that  $i \in S_i$ and $\underline{M}_i (v) = v(S_i) - v(S_i - i)$. Then for an arbitrary coalition $S \in 2^N$ we derive that
	\begin{equation}
		v' (S) = (v - \underline{M} (v) \, ) (S) = v(S) - \sum_{j \in S} \underline{M}_j (v)  = v(S) - \sum_{j \in S} v(S_j) + \sum_{j \in S} v(S_j -j ) .
	\end{equation}
	Now for any player $i \in S$ it is easy to see that
	\begin{equation} \label{eq:KM-proof}
		 v' (S) - v' (S-i) = v(S) - v(S-i) - \underline{M}_i (v) .
	\end{equation}
	This, in turn, implies that
	\begin{align*}
		\underline{M}_i(v') & = \min_{S \in 2^N: i \in S} [v'(S) - v'(S - i)] \\
		& = \min_{S \in 2^N: i \in S} (v(S) - v(S - i) \, ) -  \underline{M}_i(v) \\
		& = \underline{M}_i(v) - \underline{M}_i(v) = 0 
	\end{align*}
	where the first equality follows by definition and the second equality follows from (\ref{eq:KM-proof}). Thus, (ii) is satisfied.
	\end{numm}
	It is easily established that for every $v \in \mathbb V^N \colon \overline{M} (v - \underline{M} (v) \, ) = \overline{M} (v) - \underline{M}(v)$. Hence, with (i) and (ii) above, $(\underline{M} , \overline{M} )$ is shown to be a bound pair on $\mathbb V^N$. \\[1ex]
	To establish that $\mathbb B (\underline{M} , \overline{M} ) = \mathbb V^N$ we introduce $N_0 = \varnothing$ and for every $k = 1, \ldots , n \colon N_k = \{ 1, \ldots , k \} \subseteq N$. In particular, $N_n = N$. Hence, for every game $v \in \mathbb V^N$ and every player $i \in N$ we have that $\underline{M}_i (v) \leqslant v(N_i) - v(N_{i-1}) \leqslant \overline{M}_i (v)$. This implies that
	\[
	\sum_{i \in N} \underline{M}_i (v) \leqslant \sum_{k \in N} \left( v(N_k) - v(N_{k-1}) \, \right) = v(N_n) = v(N) \leqslant \sum_{i \in N} \overline{M}_i (v) .
	\]
	This shows that indeed $\mathbb B (\underline{M} , \overline{M} ) = \mathbb V^N$.
\end{proof}

\bigskip\noindent
We refer to the $(\underline{M} , \overline{M} )$-compromise value $\kappa \colon \mathbb V^N \to \mathbb R^N$ as the \emph{KM-value}. It is a compromise value that is defined for \emph{all} cooperative games.

We note that the KM-value $\kappa$ is also self-dual in the sense that $\kappa (v) = \kappa (v^*)$ for all $v \in \mathbb V^N$. Indeed, we note that for $v \in \mathbb V^N$ and $i \in N \colon$
\begin{align*}
	\underline{M}_i (v^*) & = \min_{S \colon i \in S} \left( v^* (S) - v^* (S-i) \, \right) = \min_{S \colon i \in S} \left( v(N) - v(N \setminus S) - v(N) + v (N \setminus (S-i) \, ) \, \right) \\
	& = \min_{S \colon i \in S} \left( v( (N \setminus S) +i) - v(N \setminus S) \, \right) =  \min_{T \colon i \in T} \left( v(T) - v(T-i) \, \right) = \underline{M}_i (v) .
\end{align*}
Similarly, we can show that $\overline{M} (v^*) = \overline{M} (v)$, and thus $\kappa (v^*) = \kappa (v)$.

\paragraph{The KM-value for convex games}

With reference to the discussion of the $\tau$-value in Section 2, we remark that all convex games are semi-balanced, i.e., $\mathbb V_C^N \subset \mathbb V^N_S$, as pointed out by \citet{Tijs1981}. The next proposition shows that for convex games, the $\tau$-value is equal to the KM-value.
\begin{proposition} \label{prop:Tau=KM}
	For every convex game $v \in \mathbb V^N_C \colon \kappa (v) = \tau (v)$. 
\end{proposition}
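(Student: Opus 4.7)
The plan is to show that on the class of convex games the two bound pairs $(\underline{M}, \overline{M})$ and $(m, M)$ coincide, so that their induced compromise values — both computed by the efficient balancing formula of Proposition \ref{prop:CompromisePoint} — must agree. Concretely, I would verify $\underline{M}(v) = m(v) = \underline{\nu}(v)$ and $\overline{M}(v) = M(v)$ for every $v \in \mathbb V^N_C$. The first two equalities are automatic, while the identification of $m(v)$ with $\underline{\nu}(v)$ requires a short argument.

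For the easy half, I would invoke the Driessen characterisation of convexity recalled in the preliminaries: for every $i \in S \subseteq T$, $v(S) - v(S-i) \leqslant v(T) - v(T-i)$. Specialising $T = N$ shows that the maximal marginal contribution is attained at the grand coalition, so $\overline{M}_i(v) = v(N) - v(N-i) = M_i(v)$; specialising $S = \{i\}$ shows that the minimal one is attained at the singleton, so $\underline{M}_i(v) = v_i$. Hence on $\mathbb V^N_C$ the Milnor upper bound and the utopia upper bound coincide, as do the Kikuta lower bound and the individual-worth vector $\underline{\nu}$.

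The less immediate step, which I expect to be the main obstacle, is to show that the Tijs minimal-rights vector $m$ also collapses to $\underline{\nu}$ on $\mathbb V^N_C$. Since $R_i(\{i\}, v) = v_i$, it suffices to prove the reverse inequality $R_i(S, v) \leqslant v_i$ for every coalition $S \ni i$, i.e., $v(S) - v_i \leqslant \sum_{j \in S - i} M_j(v)$. I would establish this by a telescoping argument: fix any enumeration $S - i = \{j_1, \ldots, j_k\}$, set $S_\ell = \{i, j_1, \ldots, j_\ell\}$ so that $S_\ell - j_\ell = S_{\ell-1}$, and write
\[
v(S) - v_i = \sum_{\ell=1}^k \bigl( v(S_\ell) - v(S_\ell - j_\ell) \bigr) \leqslant \sum_{\ell=1}^k \bigl( v(N) - v(N - j_\ell) \bigr) = \sum_{j \in S-i} M_j(v),
\]
where the termwise inequality is convexity applied to the inclusion $S_\ell \subseteq N$ with pivot $j_\ell$. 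Taking the maximum over $S \ni i$ then yields $m_i(v) = v_i$, as required.

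Having identified the two bound pairs on $\mathbb V^N_C$, the conclusion is immediate: every convex game is semi-balanced (so $\tau(v)$ is defined) and $\mathbb B(\underline{M}, \overline{M}) = \mathbb V^N$ contains $\mathbb V^N_C$ (so $\kappa(v)$ is defined); the efficient convex-combination formula from Proposition \ref{prop:CompromisePoint} then returns the same vector whether one uses $(m, M)$ or $(\underline{M}, \overline{M})$, giving $\tau(v) = \kappa(v)$ for every $v \in \mathbb V^N_C$.
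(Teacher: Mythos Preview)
Your proposal is correct and follows essentially the same route as the paper: both arguments show that on $\mathbb V^N_C$ one has $\underline{M}(v)=m(v)=\underline{\nu}(v)$ and $\overline{M}(v)=M(v)$, whence the two compromise values coincide by Proposition~\ref{prop:CompromisePoint}. The only difference is that the paper invokes \citet{TijsDriessen1985} for the identity $m_i(v)=v_i$ on convex games, whereas you supply a direct telescoping proof of this fact, making your argument more self-contained.
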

\begin{proof}
	Let $v \in \mathbb V_C^N$ be a convex game and let $i \in S \subset T$. Then by convexity and using the fact that $m_i (v) = v_i$ for every convex game as shown by \citet{TijsDriessen1985}, it follows immediately that for any player $i \in N \colon$
	\[
	\underline{M}_i (v) = \min_{S \colon i \in S} ( v(S) - v(S-i) \, ) = v_i = m_i (v),
	\]	
	and
	\[
	\overline{M}_i (v) = \max_{S \colon i \in S} ( v(S) - v(S-i) \, ) = v(N) - v(N-i) = M_i (v) .
	\]	
	Therefore, $\kappa (v) = \gamma (v; \underline{M} , \overline{M} ) = \gamma (v ; m,M) = \tau (v)$.
\end{proof}

\medskip\noindent
We have introduced compromise values for arbitrary bound pairs. In the next two sections, we discuss construction methods where we fix a specific natural lower, respectively upper, bound that has been considered in the literature on cooperative games, and study the associated compromise value that then only depends on the chosen upper, respectively lower, bound.

\section{Constructing compromise values from lower bounds}

In this section we examine a specific subset of compromise values derived solely from an imposed lower bound. We do this by introducing a specific procedure how to associate to every lower bound function an appropriate upper bound function. While this construction is not unique, the methodology employed here appears particularly intuitive.

\subsection{Regular lower bounds and LBC values}

The next definition introduces a category of ``regular'' lower bounds that satisfy condition (ii-a) of Definition \ref{def:BoundPair}, from which compromise values can be constructed in a relatively straightforward manner.
\begin{definition}
	The function $\mu \colon \mathbb V^N \to \mathbb R^N$ is referred to as a \textbf{regular lower bound} if for every $v \in \mathbb B_\ell ( \mu )$ it holds that $\mu (v- \mu(v)) =0$, where
	\begin{equation}
		\mathbb B_\ell (\mu) : = \left\{ v \in \mathbb V^N \, \left| \, \sum_{i \in N} \mu_i (v) \leqslant v(N) \, \right. \right\}
	\end{equation}
	is the subclass of $\mu$-lower bound games.
\end{definition}

 \noindent
The next proposition links the regularity of a lower bound to the ability to construct a natural upper bound with this lower bound such that the resulting pair forms a bound pair. This gives rise to the identification of a natural compromise value for any given regular lower bound.
\begin{proposition} \label{prop:LowerBoundValue}
	Let $\mu \colon \mathbb V^N \to \mathbb R^N$ be a regular lower bound on $\mathbb B_\ell ( \mu )$.
	\begin{abet}
		\item The function $\eta^\mu \colon \mathbb B_\ell ( \mu ) \to \mathbb R^N$ defined by
		\begin{equation} \label{eq:NaturalUpper}
			\eta^\mu_i (v) = v(N) - \sum_{j \neq i } \mu_j (v)
		\end{equation}
		forms a bound pair with $\mu$ on $\mathbb B_\ell ( \mu )$ in the sense that $\mu$ is the lower bound and $\eta^\mu$ is the upper bound over the corresponding class of $( \mu , \eta^\mu )$-balanced games $\mathbb B ( \mu , \eta^\mu ) = \mathbb B_\ell (\mu) \subset \mathbb V^N$.
		\item The corresponding $( \mu , \eta^\mu )$-compromise value on $\mathbb B_\ell (\mu)$ is given by
		\begin{equation} \label{eq:LowerBCompromiseValue}
			\gamma_i (v ; \mu ) = \mu_i (v) + \frac{1}{n} \, \left[ v(N) - \sum_{j \in N} \mu_j (v) \, \right]
		\end{equation}
		The value $\gamma ( \cdot ; \mu )$ can be denoted as the $\mu$-\textbf{Lower Bound Compromise} ($\mu$-LBC) value on $\mathbb B_\ell (\mu)$.
	\end{abet}
\end{proposition}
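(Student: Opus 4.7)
The plan is to verify the three conditions of Definition \ref{def:BoundPair} for the pair $(\mu,\eta^\mu)$ in order, then identify the corresponding class of bound-balanced games, and finally plug into the formula of Proposition \ref{prop:CompromisePoint} to obtain the explicit expression in (\ref{eq:LowerBCompromiseValue}). Throughout, I will read $\eta^\mu_i(v)=v(N)-\sum_{j\neq i}\mu_j(v)$ (so that each player $i$ receives what remains of $v(N)$ after all \emph{other} players collect their lower-bound payoff).

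First I would check condition (i). For every $v\in\mathbb B_\ell(\mu)$ and every $i\in N$,
\[
\eta^\mu_i(v)-\mu_i(v)=v(N)-\sum_{j\in N}\mu_j(v)\geqslant 0,
\]
where the inequality is precisely the defining property of $\mathbb B_\ell(\mu)$. Note that this difference is \emph{the same for every player}, a fact that will be crucial for step (b). Condition (ii-a), namely $\mu(v-\mu(v))=0$, is exactly the regularity assumption on $\mu$, so nothing is required. For condition (ii-b) I would compute directly, using (ii-a):
\[
\eta^\mu_i(v-\mu(v))=(v-\mu(v))(N)-\sum_{j\neq i}\mu_j(v-\mu(v))=v(N)-\sum_{j\in N}\mu_j(v)-0,
\]
and compare with $\eta^\mu_i(v)-\mu_i(v)=v(N)-\sum_{j\in N}\mu_j(v)$ from above; these agree.

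Next, to see that $\mathbb B(\mu,\eta^\mu)=\mathbb B_\ell(\mu)$, the inclusion $\mathbb B(\mu,\eta^\mu)\subseteq\mathbb B_\ell(\mu)$ is immediate from the definition of $\mathbb B_\ell(\mu)$. Conversely, for $v\in\mathbb B_\ell(\mu)$ I would sum the upper bounds:
\[
\sum_{i\in N}\eta^\mu_i(v)=n\,v(N)-(n-1)\sum_{j\in N}\mu_j(v),
\]
so that $\sum_i\eta^\mu_i(v)-v(N)=(n-1)\bigl(v(N)-\sum_j\mu_j(v)\bigr)\geqslant 0$, which is the second inequality in the definition of a bound-balanced game. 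Combined with $\sum_i\mu_i(v)\leqslant v(N)$, this yields $v\in\mathbb B(\mu,\eta^\mu)$.

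For part (b) I would invoke Proposition \ref{prop:CompromisePoint}. In the case $\mu(v)<\eta^\mu(v)$, the formula (\ref{eq:CompromiseValueDef}) can be rewritten as
\[
\gamma_i(v;\mu,\eta^\mu)=\mu_i(v)+\frac{\eta^\mu_i(v)-\mu_i(v)}{\sum_{j\in N}(\eta^\mu_j(v)-\mu_j(v))}\,\bigl(v(N)-\textstyle\sum_{j\in N}\mu_j(v)\bigr).
\]
The decisive step is the observation already made in verifying (i): $\eta^\mu_j(v)-\mu_j(v)=v(N)-\sum_{k}\mu_k(v)$ is independent of $j$. Hence the fraction simplifies to $1/n$ and the bracketed surplus telescopes with the common value of the difference, giving $\gamma_i(v;\mu,\eta^\mu)=\mu_i(v)+\tfrac{1}{n}\bigl(v(N)-\sum_j\mu_j(v)\bigr)$. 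The degenerate case $\mu(v)=\eta^\mu(v)$ corresponds to $v(N)=\sum_j\mu_j(v)$; then formula (\ref{eq:LowerBCompromiseValue}) reduces to $\mu_i(v)$, which matches $\gamma(v;\mu,\eta^\mu)=\mu(v)$. No step is genuinely hard; the only point that deserves flagging is recognising that $\eta^\mu-\mu$ is constant across players, which is what collapses the compromise value to the egalitarian-surplus form in (\ref{eq:LowerBCompromiseValue}).
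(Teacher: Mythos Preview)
Your proposal is correct and follows essentially the same approach as the paper: verify conditions (i), (ii-a), (ii-b) of Definition~\ref{def:BoundPair}, compute $\sum_i\eta^\mu_i(v)$ to identify $\mathbb B(\mu,\eta^\mu)=\mathbb B_\ell(\mu)$, and then substitute into the compromise formula of Proposition~\ref{prop:CompromisePoint}, handling the degenerate case separately. The one cosmetic difference is that in part~(b) you isolate the key observation $\eta^\mu_i(v)-\mu_i(v)=v(N)-\sum_j\mu_j(v)$ being player-independent and use it to collapse the fraction to $1/n$ in one step, whereas the paper plugs into (\ref{eq:CompromiseValueDef}) directly and simplifies the resulting expression algebraically; both arrive at the same place with comparable effort.
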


\begin{proof}
	Let $\mu \colon \mathbb V^N \to \mathbb R^N$ be a regular lower bound, i.e., $\mu (v- \mu (v))=0$. \\[1ex]
	(a) We show that the upper bound $\eta^\mu \colon \mathbb B_\ell(\mu) \rightarrow \mathbb R^N$ defined by (\ref{eq:NaturalUpper}) satisfies, in conjunction with $\mu$ as the lower bound, the conditions of Definition \ref{def:BoundPair}: \\ 
	(i) It is obvious that $\mu(v) \leqslant \eta^\mu (v)$ for $v \in \mathbb B _\ell ( \mu )$. \\
	(ii-a) Condition \ref{def:BoundPair}(ii-a) follows from the assumption that $\mu$ is a regular lower bound. 
	\\ 
	(ii-b) To show condition \ref{def:BoundPair}(ii-b), note that for $v \in \mathbb B_\ell (\mu ) \colon$
	\begin{align*}
		\eta^\mu_i (v - \mu (v)) & = (v- \mu (v)) (N) - \sum_{j \neq i} \mu_j (v- \mu (v)) \\
		& = (v- \mu (v)) (N) = v(N) - \sum_{j \in N} \mu_j (v) = \eta^\mu_i (v) - \mu_i (v) ,
	\end{align*}
	where the second equality follows from Definition \ref{def:BoundPair}(ii-a) which we showed above. Therefore, $(\mu , \eta^\mu )$, indeed, forms a bound pair as defined in Definition \ref{def:BoundPair}. 
	\\[1ex]
	To show that $\mathbb B(\mu,\eta^\mu ) = \mathbb B_\ell (\mu) $, we note that for every $v \in \mathbb B_\ell (\mu ) \colon$
	\begin{align*}
		\sum_{i \in N} \eta^\mu_i (v) & = \sum_{i \in N} \left[ \, v(N) - \sum_{j \neq i} \mu_j (v) \, \right] = n \, v(N) - (n-1) \, \sum_{j \in N} \mu_j (v) \\
		& = v(N) + (n-1) \left[ \, v(N) - \sum_{j \in N} \mu_j (v) \, \right] \geqslant v(N) ,
	\end{align*}
	showing that $\mathbb B ( \mu , \eta^\mu ) = \mathbb B_\ell ( \mu )$ is the corresponding class of $(\mu , \eta^\mu )$-balanced games. 
	\\[1ex] 
	(b) To show that $\gamma ( \cdot ; \mu )$ defined by (\ref{eq:LowerBCompromiseValue}) is the corresponding $(\mu,\eta^\mu )$-compromise value on $\mathbb B_\ell (\mu)$, note that for every $v \in \mathbb B_\ell (\mu ) \colon$
	\begin{align*}
		v(N) - \sum_{j \in N} \mu_j (v) & = \eta^\mu_i (v) - \mu_i (v) & \mbox{for any } i \in N , \mbox{ and} \\
		\sum_{j \in N} \eta^\mu_j (v)  - v(N) & = (n-1) \, \left[ \, v(N) - \sum_{j \in N} \mu_j (v) \, \right] .
	\end{align*}
	If $\mu (v) = \eta^\mu (v)$, by definition, $v(N) = \sum_{i \in N} \mu_i (v)$. Hence, the corresponding $(\mu , \eta^\mu )$-compromise value is trivially $\gamma (v ; \mu ) = \mu (v)$, confirming (\ref{eq:LowerBCompromiseValue}) for this case. \\
	For $\mu (v) < \eta^\mu (v)$, with the above and (\ref{eq:CompromiseValueDef}), we now compute that the $( \mu , \eta^\mu )$-compromise value $\gamma (v)=\gamma (v ;\mu,\eta)$ on $\mathbb B ( \mu , \eta^\mu ) = \mathbb B_\ell ( \mu )$ is given by
	\begin{align*}
		\gamma_i (v) & =  \frac{v(N) - \sum_{j \in N} \mu_j (v)}{\sum_{j \in N} ( \eta^\mu_j (v) - \mu_j (v) \, )} \, \eta^\mu_i (v) + \frac{\sum_{j \in N} \eta^\mu_j (v) - v(N)}{\sum_{j \in N} ( \eta^\mu_j (v) - \mu_j (v) \, )} \, \mu_i (v) \\
		& = \frac{v(N)-\sum_{j \in N} \mu_j(v)}{n(v(N)-\sum_{j \in N} \mu_j(v))} \left( v(N)-\sum_{j \neq i} \mu_j(v) \, \right) + \frac{(n-1)(v(N)-\sum_{j \in N} \mu_j(v))}{n(v(N)-\sum_{j \in N} \mu_j(v))} \mu_i(v) \\
		& = \tfrac{1}{n} \left[ \, v(N) - \sum_{j \neq i} \mu_j (v) \, \right] + \tfrac{n-1}{n} \, \mu_i (v) \\
		& = \tfrac{1}{n} \left[ \, v(N) - \sum_{j \in N} \mu_j (v) \, \right] + \mu_i (v) = \gamma_i ( v ; \mu )
	\end{align*}
	for any $i \in N$. \\ This completes the proof of the proposition.
\end{proof}

\medskip\noindent
The construction of $\eta^\mu$ in Proposition \ref{prop:LowerBoundValue} mirrors the reasoning behind the $\tau$-value: instead of subtracting the other players' marginal contributions from $v(N)$, we subtract their lower bounds $\mu_j (v)$.

\paragraph{A characterisation of LBC values}

The axiomatisation of arbitrary compromise values---developed in Theorem \ref{thm:MainCharacterisation}---can be sharpened for the subclass of compromise values that can be constructed from a lower bound functional. The next theorem states that the restricted proportionality property of Theorem \ref{thm:MainCharacterisation} can be replaced by an egalitarian division property.

\begin{theorem} \label{thm:LowerCharacterisation}
	Let $\mu \colon \mathbb V^N \to \mathbb R^N$ be a regular lower bound on the class of $\mu$-bound games $\mathbb B_\ell ( \mu )$. Then the $\mu$-LBC value is the unique value $f \colon \mathbb B_\ell (\mu ) \to \mathbb R^N$ that satisfies the following two properties:
	\begin{numm}
		\item \textbf{Minimal rights property:} \\ For every $v \in \mathbb B_\ell ( \mu ) \colon f(v) = f(v- \mu (v) ) + \mu (v)$.
		\item \textbf{Egalitarian division property:} \\ For every game $v$ with $\mu(v)=0$, there exists some $\lambda_v \in \mathbb R$ such that $f(v) = \lambda_v \, e$, where $e = (1 , \ldots , 1)$.
	\end{numm}
\end{theorem}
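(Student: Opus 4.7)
The plan is to mirror the two-part structure of the proof of Theorem \ref{thm:MainCharacterisation}, with the egalitarian division property taking over the role that restricted proportionality played there. The argument splits into (a) showing that the $\mu$-LBC value $\gamma(\cdot;\mu)$ satisfies the two stated properties, and (b) showing that these two properties, together with the efficiency built into the notion of a value, pin down the allocation rule uniquely on $\mathbb B_\ell(\mu)$.

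For existence, I would note that $(\mu,\eta^\mu)$ is a bound pair on $\mathbb B_\ell(\mu) = \mathbb B(\mu,\eta^\mu)$ by Proposition \ref{prop:LowerBoundValue}(a), so the minimal rights property is immediate from Theorem \ref{thm:MainCharacterisation} applied to this bound pair. For the egalitarian division property, I would simply substitute $\mu(v)=0$ into formula (\ref{eq:LowerBCompromiseValue}): this yields $\gamma_i(v;\mu) = \tfrac{1}{n}v(N)$ for every $i \in N$, i.e.\ $\gamma(v;\mu) = \lambda_v \, e$ with $\lambda_v = v(N)/n$.

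For uniqueness, let $f\colon \mathbb B_\ell(\mu) \to \mathbb R^N$ satisfy the two properties and fix $v \in \mathbb B_\ell(\mu)$. First, I need to check that $v-\mu(v)\in \mathbb B_\ell(\mu)$ so that $f$ can be evaluated at it: by regularity of $\mu$ we have $\mu(v-\mu(v))=0$, and using $v \in \mathbb B_\ell(\mu)$ we get
\[
\sum_{i \in N} \mu_i\!\left(v-\mu(v)\right) = 0 \leqslant v(N) - \sum_{j \in N} \mu_j(v) = (v-\mu(v))(N),
\]
so $v-\mu(v)\in \mathbb B_\ell(\mu)$. The minimal rights property now gives $f(v) = f(v-\mu(v)) + \mu(v)$. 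Since $\mu(v-\mu(v))=0$, the egalitarian division property yields $f(v-\mu(v)) = \lambda\, e$ for some $\lambda \in \mathbb R$, and efficiency of $f$ forces $n\lambda = (v-\mu(v))(N) = v(N) - \sum_{j\in N} \mu_j(v)$. Substituting back produces exactly formula (\ref{eq:LowerBCompromiseValue}), so $f(v) = \gamma(v;\mu)$.

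The only mildly delicate step is the domain check that $v-\mu(v)$ remains in $\mathbb B_\ell(\mu)$, which is needed to legitimately invoke the egalitarian division axiom on the translated game; once that is in hand, the argument reduces to the same template used in Theorem \ref{thm:MainCharacterisation}, namely applying minimal rights to reduce to a $\mu$-normalised game, then applying the second axiom to fix the direction of $f$, and finally closing by efficiency.
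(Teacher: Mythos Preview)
Your proof is correct and follows the same template as the paper. The only difference is packaging: for uniqueness the paper does not redo the argument but simply observes that when $\mu(v)=0$ one has $\eta^\mu_i(v)=v(N)$ for all $i$, so the egalitarian division property $f(v)=\lambda_v e$ is exactly restricted proportionality with respect to $\eta^\mu$, and Theorem \ref{thm:MainCharacterisation} can be invoked directly. Your version unpacks that corollary by hand (and in doing so makes the domain check $v-\mu(v)\in\mathbb B_\ell(\mu)$ explicit), which is fine and arguably more self-contained, but it is the same proof underneath.
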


\begin{proof}
	It is obvious that any $\mu$-LBC value indeed satisfies these two listed properties. \\ 
	The reverse---that, if $f$ satisfies the two listed properties, it is an LBC value---is a corollary of Theorem \ref{thm:MainCharacterisation} by noticing that $\eta^\mu_i(v)=v(N)$ for all $i \in N$ if $\mu(v)=0$. \\
	 	This shows the assertion.
\end{proof}

\subsection{Two well-known LBC values}

We show that the well-known Egalitarian Value and the CIS value are both compromise values that are based on the construction method set out in Proposition \ref{prop:LowerBoundValue}, based on a well-defined regular lower bound only.

\subsubsection{The Egalitarian Value}

Taking the trivial lower bound $\mu^0 \colon \mathbb V^N \to \mathbb R^N$ with $\mu^0 (v) =0$ for every $v \in \mathbb V^N$, any function $\eta \colon \mathbb V^N \to \mathbb R^N$ with $\eta (v) \geqslant 0$ for $v \in \mathbb V \subset \mathbb V^N$ is a non-trivial upper bound for $\mu^0$ to form a bound pair, where $\mathbb V \subset \mathbb V^N$ is a subclass of games $v$ for which $\mathcal A (v)$ has a non-empty relative interior.

In particular, the upper bound $\eta^{\mu^0}=\eta^0$ constructed in Proposition \ref{prop:LowerBoundValue}(a) is given by $\eta^0 (v) = ( \, v(N), \ldots , v(N) \, ) \in \mathbb R^N$. This trivial upper bound assigns the total wealth generated in the corresponding cooperative game $v \in \mathbb V^N$ to every player in the game as an upper bound on their payoff. The class of $(\mu^0 , \eta^0 )$-balanced games is given by
\begin{equation}
	\mathbb B (\mu^0 , \eta^0 ) = \mathbb B_\ell ( \mu^0 ) = \{ v \in \mathbb V^N \mid v(N) \geqslant 0 \} .
\end{equation}
The corresponding $(\mu^0 , \eta^0 )$-compromise value given by Proposition \ref{prop:LowerBoundValue}(b) is the \emph{Egalitarian Value} $E \colon \mathbb B_\ell (\mu^0) \to \mathbb R^N$ defined by\footnote{Axiomatisations of this Egalitarian Value using axioms similar as those for the Shapley value, are given in \citet{Brink2007}. }
\begin{equation}
	E_i (v) = \frac{v(N)}{n} \qquad \mbox{for every } i \in N .
\end{equation}
The Egalitarian Value $E$ is the unique value on $\mathbb B_\ell (\mu^0 )$ such that there exists some $ \lambda \geqslant 0$ with $E (v) = \lambda \, \eta^0 (v)$, i.e., $E_i (v) = \lambda \, v(N)$. Clearly, $\lambda = \tfrac{1}{n}$.

\subsubsection{The CIS-value}

Another lower bound that is considered widely in the literature on cooperative games is that of the vector of the individual worths in a game. It forms a natural lower bound on allocated payoffs and many values considered in the literature indeed have this vector as a lower bound on the assigned payoffs. 

Formally, for any cooperative game $v \in \mathbb V^N$ this lower bound is described by the vector $\underline{\nu} (v) = ( v_1, \ldots , v_n ) \in \mathbb R^N$. Since $\underline{\nu}_i(v - \underline{\nu} (v) \, ) = v_i - v_i = 0$ for any $i \in N$, the natural lower bound $\underline{\nu}$ is regular. The corresponding class of $\underline{\nu}$-lower bound games is now identified as
\begin{equation} \label{eq:CIS-class}
	\mathbb B_\ell (\underline{\nu} ) = \left\{ v \in \mathbb V^N \, \left| \, \sum_{i \in N} v_i \leqslant v(N) \right. \right\}
\end{equation}
which includes the class of essential games. We note that $\mathbb B_\ell ( \underline{\nu} )$ is the class of games that admit a non-empty set of imputations.

Using Proposition \ref{prop:LowerBoundValue}(a), we can construct the corresponding upper bound $\eta' \colon \mathbb V^N \to \mathbb R^N$ which for every $i \in N$ is defined by
\begin{equation}
	\eta'_i (v) = v(N) - \sum_{j \neq i} v_j
\end{equation}
The resulting $(\underline{\nu} , \eta' )$-compromise value as constructed in Proposition \ref{prop:LowerBoundValue}(b) is the \emph{Centre-of-gravity of the Imputation Set} (CIS) considered by \citet{DriessenFunaki1991}, which on the class of $\underline{\nu}$-lower bound games $\mathbb B_\ell ( \underline{\nu} )$ for every $i \in N$ is defined by
\begin{equation}
	\mathrm{CIS}_i (v) = v_i + \tfrac{1}{n} \left( v(N) - \sum_{j \in N} v_j \, \right)
\end{equation}
It can easily be verified that the CIS-value is indeed equal to the $(\underline{\nu} , \eta' )$-compromise value as already remarked by \citet{Brink1994a}.\footnote{We refer also to \citet{DriessenFunaki1991,EANSC1996,BrinkFunaki2009,EANSC2019} and \citet{Sharing2022} for discussions of the CIS-value and related concepts from different perspectives.}

\medskip\noindent
A one-parameter interpolation between these two LBC values, the \emph{weighted CIS value}, has recently been axiomatised by \citet{Shan2026}; it coincides with the LBC value for the lower bound $\mu_i (v) = \alpha_i \, v_i$, which is a regular lower bound precisely at the endpoints $\alpha \in \{ 0,1 \}$.

\section{Constructing compromise values from upper bounds}

In Proposition \ref{prop:LowerBoundValue}, we provide a method to derive an associated upper bound from any regular lower bound using Equation (\ref{eq:NaturalUpper}). This also introduces a method to construct a corresponding compromise value. In this section, we explore the possibility of constructing compromise values by describing a method to associate a lower bound to an upper bound. For a given covariant upper bound $\eta$, we construct a proper corresponding lower bound $\mu^\eta$ such that $(\mu^\eta, \eta)$ forms a proper bound pair. 

Our construction method is based on the methodology developed by \citet{Tijs1981} for calculating the $\tau$-value. \citet{Chi1996} further developed this approach by combining the Milnor upper bound $\overline M$ with the Tijs lower bound construction method. \citet{Compromise2000note} extended this research and introduced a general construction method based on this methodology. Here, we further generalise this method.

\subsection{Covariant upper bounds and UBC values}

We note that the methodology as set out by \citet{Compromise2000note}, founded on Tijs's construction, is rather restrictive. Indeed, since the conditions on the class of upper-bound games are rather strict, it might be that for certain upper bounds this class is empty. 

Here, we explore a more general approach, allowing a larger class of upper-bound games defined in (\ref{eq:UboundGames}) below. It is based on a covariance condition on the selected upper bound which allows the construction of a corresponding lower bound to form a bound pair. This, in turn, allows the construction of a proper compromise value. This method is fully stated in Proposition \ref{prop:UpperBoundValue}, which extends the insight of \citet[Proposition 3.2]{Compromise2000note}.

We recall that a function $f \colon \mathbb V^N \to \mathbb R^N$ is \emph{translation covariant} if $f(v+x) = f(v) +x$ for any $x \in \mathbb R^N$. Our method constructs a compromise value from a given translation covariant upper bound.

For any translation covariant $\eta \colon \mathbb V^N \to \mathbb R^N$, define
\begin{equation} \label{eq:UboundGames}
	\mathbb B_u (\eta ) = \left\{ \, v \in \mathbb V^N \, \left| \, v(S) \leqslant \sum_{i \in S} \eta_i (v) \mbox{ for every } S \subseteq N \, \right. \right\}
\end{equation}
as the subclass of \emph{strongly $\eta$-bound games}.
\begin{proposition} \label{prop:UpperBoundValue}
	Let $\eta \colon \mathbb V^N \to \mathbb R^N$ be translation covariant on $\mathbb V^N$ and $\mathbb B_u (\eta )$ the corresponding class of strongly $\eta$-bound games. Then the function $\mu^\eta \colon \mathbb B_u ( \eta ) \to \mathbb R^N$ defined by
	\begin{equation} \label{eq:NaturalLower}
		\mu^\eta_i (v) = \max_{S \subseteq N \colon i \in S} R_i (S,v) \qquad \mbox{where } R_i (S,v) = v(S) - \sum_{j \in S -i} \eta_j (v)
	\end{equation}
	forms a bound pair with $\mu^\eta$ being the lower bound and $\eta$ being the upper bound over the corresponding class $\mathbb B_u (\eta) \subset \mathbb V^N$.
\end{proposition}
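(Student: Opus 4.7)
The plan is to verify, one by one, the three defining conditions of a bound pair in Definition~\ref{def:BoundPair} for the pair $(\mu^\eta,\eta)$ on $\mathbb B_u(\eta)$. The decisive ingredients will be the translation covariance of $\eta$ (which tells us exactly how $\eta$ transforms under subtraction of an additive game) and the strong $\eta$-bound inequality $v(S)\leqslant\sum_{i\in S}\eta_i(v)$ that defines $\mathbb B_u(\eta)$. Throughout, I will silently use the convention that the maximum in (\ref{eq:NaturalLower}) is effectively realised on coalitions containing $i$: for $S\not\ni i$ the identity $S-i=S$ reduces $R_i(S,v)$ to $v(S)-\sum_{j\in S}\eta_j(v)\leqslant 0$ by the strong bound, while taking $S=\{i\}$ already gives $R_i(\{i\},v)=v_i$, so these non-containing coalitions do not affect the argument.

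First I would establish condition (i), namely $\mu^\eta(v)\leqslant\eta(v)$. For any $i\in N$ and any $S\ni i$, the strong bound $v(S)\leqslant\sum_{j\in S}\eta_j(v)=\eta_i(v)+\sum_{j\in S-i}\eta_j(v)$ rearranges to $R_i(S,v)\leqslant\eta_i(v)$, and for $S\not\ni i$ the same bound gives $R_i(S,v)\leqslant 0\leqslant\eta_i(v)$ (the latter inequality coming from $v_i\leqslant\eta_i(v)$ together with $R_i(\{i\},v)=v_i$ lying in the range of the max). Taking the maximum over $S$ yields $\mu^\eta_i(v)\leqslant\eta_i(v)$.

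Next, I would simultaneously handle $v-\mu^\eta(v)\in\mathbb B_u(\eta)$ and condition (ii-b). By translation covariance of $\eta$ (applied with $x=-\mu^\eta(v)\in\mathbb R^N$), one has $\eta\bigl(v-\mu^\eta(v)\bigr)=\eta(v)-\mu^\eta(v)$, which is precisely (ii-b). Plugging this identity into the strong-bound condition gives, for any $S\subseteq N$,
\[
(v-\mu^\eta(v))(S)=v(S)-\sum_{j\in S}\mu^\eta_j(v)\leqslant\sum_{j\in S}\eta_j(v)-\sum_{j\in S}\mu^\eta_j(v)=\sum_{j\in S}\eta_j\bigl(v-\mu^\eta(v)\bigr),
\]
using that $v\in\mathbb B_u(\eta)$. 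Hence $v-\mu^\eta(v)\in\mathbb B_u(\eta)$, so that $\mu^\eta$ can legitimately be evaluated on this reduced game.

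The main work is the verification of (ii-a), i.e.\ $\mu^\eta(v-\mu^\eta(v))=0$. Fix $i\in N$ and a coalition $S\ni i$. A direct expansion, combined again with $\eta(v-\mu^\eta(v))=\eta(v)-\mu^\eta(v)$, yields
\[
R_i\bigl(S,v-\mu^\eta(v)\bigr)=v(S)-\sum_{j\in S}\mu^\eta_j(v)-\sum_{j\in S-i}\bigl(\eta_j(v)-\mu^\eta_j(v)\bigr)=R_i(S,v)-\mu^\eta_i(v),
\]
since the contributions of $\mu^\eta_j(v)$ for $j\in S-i$ cancel and the remaining $\mu^\eta_i(v)$ term is isolated. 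For $S\not\ni i$ the same algebra (now with $S-i=S$) gives $R_i(S,v-\mu^\eta(v))=v(S)-\sum_{j\in S}\eta_j(v)\leqslant 0$ by the strong bound on $v$. Taking the maximum over all $S\subseteq N$, the non-containing coalitions contribute at most $0$, while the containing ones contribute at most $\mu^\eta_i(v)-\mu^\eta_i(v)=0$; moreover the value $0$ is attained (e.g.\ at $S=\emptyset$, or by any $S\ni i$ realising the max that defines $\mu^\eta_i(v)$). Therefore $\mu^\eta_i(v-\mu^\eta(v))=0$, establishing (ii-a).

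The part I expect to require the most care is the algebraic reduction in (ii-a), where one must track which terms involve $i$ and which do not; everything else is a direct application of translation covariance together with the defining inequality of $\mathbb B_u(\eta)$. Once the three conditions of Definition~\ref{def:BoundPair} are in place, the conclusion that $(\mu^\eta,\eta)$ is a bound pair on $\mathbb B_u(\eta)$ is immediate, and $\mathbb B_u(\eta)\subseteq\mathbb B(\mu^\eta,\eta)$ follows by taking $S=N$ in the strong bound together with $\sum_i\mu^\eta_i(v)\leqslant\sum_i\eta_i(v)$.
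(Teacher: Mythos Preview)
Your argument is correct and follows the same route as the paper: condition~(i) from the strong $\eta$-bound inequality, condition~(ii-b) directly from translation covariance of $\eta$, and condition~(ii-a) by reducing $R_i(S,v-\mu^\eta(v))$ to $R_i(S,v)-\mu^\eta_i(v)$. The only methodological difference is that for~(ii-a) the paper simply cites \citet[Proposition~3.2]{Compromise2000note} to assert that $\mu^\eta$ is itself translation covariant (whence $\mu^\eta(v-\mu^\eta(v))=\mu^\eta(v)-\mu^\eta(v)=0$ is immediate), whereas you carry out the computation by hand; your version is self-contained and makes the mechanism transparent. You also explicitly verify that $v-\mu^\eta(v)\in\mathbb B_u(\eta)$, a closure step the paper leaves implicit but which Definition~\ref{def:BoundPair}(ii) formally requires.

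One small correction: your closing claim that $\mathbb B_u(\eta)\subseteq\mathbb B(\mu^\eta,\eta)$ is not justified by what you cite. Taking $S=N$ in the strong bound gives $v(N)\leqslant\sum_i\eta_i(v)$, and condition~(i) gives $\sum_i\mu^\eta_i(v)\leqslant\sum_i\eta_i(v)$, but neither yields $\sum_i\mu^\eta_i(v)\leqslant v(N)$. Indeed, the paper introduces the strictly smaller class $\overline{\mathbb B}_u(\eta)$ immediately after the proposition precisely to impose this missing inequality. This remark is extraneous to the proposition itself, so it does not affect the validity of your proof.
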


\begin{proof}
	Let $\eta \colon \mathbb V^N \to \mathbb R^N$ and $\mathbb B_u ( \eta )$ be defined as formulated in the assertion. If $\mathbb B_u ( \eta ) = \varnothing$ then the assertion of Proposition \ref{prop:UpperBoundValue} holds trivially. \\ 
	Assuming $\mathbb B_u ( \eta ) \neq \varnothing$, take any $v \in \mathbb B_u ( \eta )$. Note that by definition of $\mathbb B_u (\eta )$, $v(N) \leqslant \sum_{j \in N} \eta_j (v)$. Next, let $\mu^\eta (v)$ be as defined in (\ref{eq:NaturalLower}). \\
	First, note that $\mu^\eta_i (v) \geqslant v_i$ for every $v \in \mathbb B_u ( \eta )$ and player $i \in N$ since $R_i(i,v)=v_i$. \\
	To show that $\mu^\eta (v) \leqslant \eta (v)$, assume to the contrary that there is some player $i \in N$ with $\mu^\eta_i (v) > \eta_i (v)$. From  (\ref{eq:NaturalLower}), there exists some coalition $S_i \subseteq N$ with $i \in S_i$ and
	\[
	\mu^\eta_i (v) = R_i (S_i, v) = v(S_i) - \sum_{j \in S_i - i} \eta_j (v) > \eta_i (v) .
	\]
	But then it follows that $\sum_{j \in S_i} \eta_j (v) < v(S_i)$, which contradicts that $v \in \mathbb B_u ( \eta )$. Hence, $\mu^\eta (v) \leqslant \eta (v)$, showing that condition (i) of Definition \ref{def:BoundPair} is satisfied.
	\\[1ex]
	From \citet[Proposition 3.2]{Compromise2000note}, it immediately follows that $\mu^\eta$ defined in (\ref{eq:NaturalLower}) is translation covariant on $\mathbb V^N$. Hence, in particular this implies that $\mu^\eta (v- \mu^\eta (v)) = 0$, showing that condition (ii-a) of Definition \ref{def:BoundPair} is satisfied.
	\\[1ex]
	Finally, it follows immediately from covariance of $\eta$ and the definition of $\mu^\eta$ that condition (ii-b) of Definition \ref{def:BoundPair} is satisfied. Therefore, $(\mu^\eta , \eta )$ forms a proper bound pair on the subclass of upper bound games $\mathbb B_u ( \eta ) \subset \mathbb V^N$.
\end{proof}

\bigskip\noindent
Based on the proposition we introduce the proper class of $(\mu^\eta , \eta )$-bounded games by
\begin{equation}
	\overline{\mathbb B}_u ( \eta ) = \left\{ v \in  \mathbb B_u ( \eta ) \, \left| \, \sum_{i \in N} \mu^\eta_i (v) \leqslant v(N) \leqslant \sum_{i \in N} \eta_i (v) \, \right. \right\}
\end{equation}
Proposition \ref{prop:UpperBoundValue} introduces implicitly a compromise value that is founded on a translation covariant upper bound $\eta$. We refer to the corresponding $(\mu^\eta , \eta )$-compromise value on $\overline{\mathbb B}_u ( \eta ) \subseteq \mathbb B_u (\eta)$ as the {\em $\eta$-Upper Bound Compromise} ($\eta$-UBC) value. We note the difference with Proposition \ref{prop:LowerBoundValue} where an LBC value is defined on $\mathbb B_\ell(\mu)$, while a UBC value is defined on a possibly strict subset $\overline{\mathbb B}_u ( \eta ) \subseteq \mathbb B_u (\eta) \subset \mathbb V^N$.

\paragraph{A characterisation of UBC values.}

\citet{Compromise2000note} provided a characterisation of the compromise value constructed from a translation covariant upper bound. For completeness, we provide this characterisation here as a restatement.
\begin{lemma} \label{thm:SorianoCharacterisation}  \textbf{\emph{\citep[Theorem 3.5]{Compromise2000note}}} \\
	Let $\eta \colon \mathbb V^N \to \mathbb R^N$ be a translation covariant upper bound on $\mathbb V^N$ and let $\mu^\eta$ be the constructed lower bound from (\ref{eq:NaturalLower}). Then the $\eta$-UBC value is the unique value $f \colon \overline{\mathbb B}_u ( \eta ) \to \mathbb R^N$ that satisfies the following two properties:
	\begin{numm}
		\item \textbf{Covariance property:} \\ For every $v \in \overline{\mathbb B}_u ( \eta )$ and every $x \in \mathbb R^N \colon f(\lambda v + x) = \lambda f(v ) + x$ for any $\lambda >0$
		\item \textbf{Restricted proportionality:} \\ For every game $v \in \overline{\mathbb B}_u ( \eta )$ with $\mu^\eta (v) =0$, there is some $\lambda_v \in \mathbb R$ such that $f(v) = \lambda_v \, \eta (v)$.
	\end{numm}
\end{lemma}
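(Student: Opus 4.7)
The plan is to establish this characterisation by following the template of Theorem \ref{thm:MainCharacterisation}, but specialised to the upper-bound-constructed setting. First I would verify that the $\eta$-UBC value $\gamma(\cdot;\mu^\eta,\eta)$ itself satisfies the two listed properties. Restricted proportionality is immediate from formula (\ref{eq:CompromiseValueDef}): when $\mu^\eta(v)=0$ the compromise point collapses to $\gamma(v;\mu^\eta,\eta)=\frac{v(N)}{\sum_{j\in N}\eta_j(v)}\,\eta(v)$, which exhibits the required scalar $\lambda_v$. For the covariance property, I would exploit that $\eta$ is translation covariant by hypothesis and that the construction (\ref{eq:NaturalLower}) propagates this (together with the positive-homogeneity behaviour implicit in the $\max$-of-linear-expressions structure) to give $\mu^\eta(\lambda v+x)=\lambda\mu^\eta(v)+x$ for every $\lambda>0$ and $x\in\mathbb R^N$. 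Since formula (\ref{eq:CompromiseValueDef}) is homogeneous of degree one in $v(N)$, $\mu^\eta(v)$ and $\eta(v)$ jointly and additive under translations, $\gamma(\lambda v+x;\mu^\eta,\eta)=\lambda\gamma(v;\mu^\eta,\eta)+x$ follows.

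For uniqueness, I would take any value $f\colon\overline{\mathbb B}_u(\eta)\to\mathbb R^N$ satisfying the two properties, fix $v\in\overline{\mathbb B}_u(\eta)$, and set $v'=v-\mu^\eta(v)$. By the bound-pair property (ii-a) of Proposition \ref{prop:UpperBoundValue}, $\mu^\eta(v')=0$, so restricted proportionality yields $f(v')=\lambda_{v'}\,\eta(v')$ for some scalar $\lambda_{v'}$. Efficiency of $f$ pins this scalar down uniquely whenever $\sum_{j\in N}\eta_j(v')\neq 0$, namely $\lambda_{v'}=v'(N)/\sum_{j\in N}\eta_j(v')$; the degenerate case $\mu^\eta(v')=\eta(v')=0$ is handled directly since then $f(v')=0$. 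Applying the covariance property with $\lambda=1$ and $x=\mu^\eta(v)$ gives $f(v)=f(v')+\mu^\eta(v)$, and substituting $\eta(v')=\eta(v)-\mu^\eta(v)$ via bound-pair property (ii-b) recovers exactly formula (\ref{eq:CompromiseValueDef}) for $\gamma(v;\mu^\eta,\eta)$.

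The main obstacle I anticipate is the bookkeeping around verifying that $\mu^\eta$ inherits the stronger covariance (not merely translation covariance) needed in the existence step; this requires a direct calculation from (\ref{eq:NaturalLower}), using $\eta(\lambda v+x)=\lambda\eta(v)+x$ applied termwise to each $R_i(S,v)=v(S)-\sum_{j\in S-i}\eta_j(v)$, after which the $\max$ over $S$ transports the factor $\lambda$ and the translation $x_i$ cleanly. Once that verification is secured, the remainder is effectively a specialisation of Theorem \ref{thm:MainCharacterisation}: the covariance property supplies the minimal rights identity (as the $\lambda=1$ slice) and restricted proportionality plays the same pinning-down role as in the general axiomatisation, so no further new ideas are required.
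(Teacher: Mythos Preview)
The paper does not supply its own proof of this lemma; it is simply a restatement of \citep[Theorem 3.5]{Compromise2000note} with no argument given. Your sketch follows exactly the natural route one would take, namely specialising the proof of Theorem \ref{thm:MainCharacterisation}, and the uniqueness direction is clean: covariance at $\lambda=1$ recovers the minimal rights identity $f(v)=f(v-\mu^\eta(v))+\mu^\eta(v)$, after which restricted proportionality and efficiency pin down the scalar, and properties (ii-a) and (ii-b) of the bound pair reconstruct formula (\ref{eq:CompromiseValueDef}).

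One point deserves care in the existence half. You write $\eta(\lambda v+x)=\lambda\eta(v)+x$ and propagate this to $\mu^\eta$, but the lemma \emph{as restated in this paper} assumes only translation covariance of $\eta$, i.e.\ the identity for $\lambda=1$. Positive homogeneity $\eta(\lambda v)=\lambda\eta(v)$ is an additional hypothesis, without which $\gamma(\lambda v;\mu^\eta,\eta)=\lambda\gamma(v;\mu^\eta,\eta)$ need not hold. This is not a defect in your reasoning so much as a mismatch between the restatement and the original source: as the introduction of the present paper itself notes, \citet{Compromise2000note} impose a strictly stronger covariance condition on $\eta$, and under that stronger hypothesis your verification goes through exactly as you describe. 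So your proof is correct for the theorem as originally stated, and you have in fact put your finger on why the full covariance axiom on $f$ cannot be derived from translation covariance of $\eta$ alone.
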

Note that the characterisation given in Lemma \ref{thm:SorianoCharacterisation} deviates from the general axiomatisation of compromise values (Theorem \ref{thm:MainCharacterisation}), which is based on replacing a minimal rights hypothesis with a covariance property. 

\subsection{Two examples of UBC values}

The $\tau$-value, already discussed in Section 2 of this paper, is the quintessential example of a UBC value. The construction of the $\tau$-value shows that it is a UBC value founded on the marginal contributions vector $M$ of the players to the grand coalition $N$. Hence, the $\tau$ value is the $M$-UBC value. This is studied extensively in \citet{Tijs1981,Tijs1987,TijsDriessen1987,Tau2003} and \citet{Yanovskaya2010}. Furthermore, we note that $M$ is also an upper bound in the definitions of the PANSC and Gately values in Section 3, which consequently are \emph{not} UBC values.

 Here, we discuss shortly two further well-known compromise values that can be constructed from a well-defined upper bound. In particular, we explore two natural upper bounds that satisfy the conditions imposed in Proposition \ref{prop:UpperBoundValue}.
 
 The first upper bound considered is the Milnor bound $\overline{M}$. Following the construction method of Proposition \ref{prop:UpperBoundValue}, this results in the $\chi$-value introduced by \citet{Chi1996}, which is defined for all weakly essential games. 

The second upper bound considered concerns the residual that remains for a player $i \in N$ if all other players $j \neq i$ are paid their individual worth $v_j$. We show that for a certain class of games, the resulting compromise value from this natural upper bound is the CIS-value. With the results of Section 4.3.2, this leads to the insight that the CIS-value is a compromise value that can be constructed from a lower as well as an upper bound. 

\subsubsection{The $\chi$-value}

\citet{Chi1996} introduced the notion of the $\chi$-value as an explicit modification of the $\tau$-value through the selection of the Milnor upper bound $\overline M$ \citep{Milnor1952} instead of the marginal contributions vector. Recalling the definition of $\overline M$ in (\ref{equm}), we immediately conclude that $\mathbb B_u ( \overline{M} ) = \mathbb V^N$. Defining $\mu^{\overline{M}}$ through (\ref{eq:NaturalLower}) \citet{Chi1996} showed that
\[
\overline{\mathbb B}_u ( \overline{M} ) = \left\{ v \in \mathbb V^N \, \left| \, \sum_{i \in N} v_i \leqslant v(N) \, \right. \right\}
\]
is the class of \emph{weakly essential} games.  The $\chi$-value is now defined as the corresponding $\overline M$-UBC value on $\overline{\mathbb B}_u ( \overline{M} )$, given by $\chi = \gamma \left( \cdot ; \mu^{\overline{M}} , \overline{M} \, \right)$.

Noting that any convex game is always weakly essential, the following insight for convex games follows immediately from the proof of Proposition \ref{prop:Tau=KM} and the definition of the $\chi$-value.
\begin{corollary}
	For every convex game $v \in \mathbb V^N_C \colon \chi (v) = \tau (v) = \kappa (v)$.
\end{corollary}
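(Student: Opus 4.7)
The plan is straightforward: prove $\chi(v) = \tau(v)$ on $\mathbb V_C^N$ and then invoke Proposition \ref{prop:Tau=KM} to append $\tau(v) = \kappa(v)$. The proof of Proposition \ref{prop:Tau=KM} already records that for every convex game $v$ and $i \in N$, the maximum in the definition of $\overline{M}_i(v)$ is attained at the grand coalition $N$, yielding $\overline{M}_i(v) = v(N) - v(N-i) = M_i(v)$; hence $\overline{M}$ coincides with the marginal contribution vector $M$ on $\mathbb V_C^N$.

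Because the UBC lower bound formula (\ref{eq:NaturalLower}) depends on the upper bound only through the sums $\sum_{j \in S - i} \eta_j(v)$, substituting $\eta = \overline{M} = M$ gives, for every $i \in N$,
\[
\mu^{\overline{M}}_i(v) \;=\; \max_{S} \Bigl[\, v(S) - \sum_{j \in S - i} M_j(v) \,\Bigr] \;=\; m_i(v),
\]
where the second equality uses the paper's identification of the $\tau$-value as the $M$-UBC value at the start of Section 5.2, which is precisely the statement $\mu^M = m$. Thus the bound pair $(\mu^{\overline{M}}, \overline{M})$ underlying $\chi$ coincides with the Tijs pair $(m, M)$ underlying $\tau$ on $\mathbb V_C^N$; since $\mathbb V_C^N \subset \mathbb V_S^N$ is contained in the domains of both compromise values, formula (\ref{eq:CompromiseValueDef}) returns the same allocation for each, so $\chi(v) = \tau(v)$.

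Chaining this with $\tau(v) = \kappa(v)$ from Proposition \ref{prop:Tau=KM} yields $\chi(v) = \tau(v) = \kappa(v)$ for every $v \in \mathbb V_C^N$, as claimed. The only substantive input is the collapse $\overline{M}(v) = M(v)$ on convex games, which is already established inside the proof of Proposition \ref{prop:Tau=KM}; the remaining steps amount to formal substitutions in the definitions of the two bound pairs, so there is no genuine obstacle here.
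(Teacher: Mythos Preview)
Your proposal is correct and follows essentially the same approach as the paper, which does not give an explicit proof but simply remarks that the corollary ``follows immediately from the proof of Proposition~\ref{prop:Tau=KM} and the definition of the $\chi$-value.'' Your write-up is exactly the natural expansion of that remark: use $\overline{M}(v)=M(v)$ on convex games to collapse the $\chi$ bound pair $(\mu^{\overline{M}},\overline{M})$ onto the Tijs pair $(m,M)$, and then invoke Proposition~\ref{prop:Tau=KM} for the $\kappa$-part.
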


\subsubsection{The CIS-value redux}

In Section 4.2.2, we developed the CIS-value as an LBC-compromise value from the well-defined lower bound $\underline{\nu} (v) = (v_1, \ldots , v_n)$ for all $v \in \mathbb V^N$. The resulting compromise value based on the construction method set out in Proposition \ref{prop:LowerBoundValue} was the $(\underline{\nu} , \eta' )$-compromise value, where $\eta'_i (v) = v(N) - \sum_{j \neq i} v_j$ for $v \in \mathbb V^N$ and $i \in N$. We concluded there that the $(\underline{\nu} , \eta' )$-compromise value is the CIS-value on $\mathbb B_\ell (\underline{\nu} )$ defined by (\ref{eq:CIS-class}). 

Using the construction method set out in Proposition \ref{prop:UpperBoundValue} based on the upper bound $\eta'$, we can construct the corresponding $(\underline{\nu} ,\eta^\prime )$-compromise value as the $\eta'$-UBC value, showing it is equal to the CIS-value. 

\medskip\noindent
Let the upper bound $\eta'$ be given as above. Then
\begin{align*}
	\mathbb B_u ( \eta' ) & = \left\{v \in \mathbb V^N \, \left| \, v(S) \leqslant \sum_{i \in S} \left( \, v(N) - \sum_{j \in N - i} v_j \, \right) \mbox{ for every } S \subseteq N \right. \right\} \\[1ex]
	& = \left\{ \, v \in V^N \, \left| \, v(S) \leqslant |S| v(N) - \sum_{i \in S} \sum_{j \in N - i} v_j \mbox{ for every } S \subseteq N \, \right. \right\} \\[1ex]
	& = \left\{ \, v \in V^N \, \left| \, v(S) \leqslant |S| v(N) - \sum_{i \in N \setminus S} |S| v_i - \sum_{i \in S} (|S|-1)v_i \mbox{ for every } S \subseteq N \, \right. \right\}  \\[1ex]
	& = \left\{ v \in \mathbb V^N \, \left| \, v(S) - \sum_{i \in S} v_i \leqslant |S| \, \left( \, v(N) - \sum_{j \in N} v_j \, \right) \mbox{ for every } S \subseteq N \, \right. \right\}
\end{align*}
Note that $\varnothing \neq \mathbb B_u ( \eta' ) \subset \{ v \in \mathbb V^N \mid v(N) \geqslant \sum_{i \in N} v_i \, \}$. Furthermore, $\mathbb B_u ( \eta' )$ contains all games of which the zero-normalisation is monotone.
\begin{proposition}
	Consider the class of cooperative games given by
	\begin{equation}
		\widehat{\mathbb B} = \left\{ v \in \mathbb V^N \, \left| \, v(S) - \sum_{i \in S} v_i \leqslant (|S| -1) \, \left( \, v(N) - \sum_{j \in N} v_j \, \right) \mbox{ for every } S \subseteq N \, \right. \right\} \subset \mathbb B_u ( \eta' ) .
	\end{equation}
	Then for every $v \in \widehat{\mathbb B}$ the corresponding lower bound defined in Proposition \ref{prop:UpperBoundValue} is $\underline{\nu} (v) = (v_1, \ldots ,v_n)$. Consequently, the constructed compromise value on $\widehat{\mathbb B}$ as asserted in Proposition \ref{prop:UpperBoundValue} is the corresponding $(\underline{\nu} , \eta' )$-compromise value, being the CIS-value.
\end{proposition}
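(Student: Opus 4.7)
The plan is to verify the three claims of the proposition in sequence: the inclusion $\widehat{\mathbb B}\subset\mathbb B_u(\eta')$, the identification $\mu^{\eta'}(v)=\underline{\nu}(v)$ for every $v\in\widehat{\mathbb B}$, and the resulting equality of the $\eta'$-UBC value with the CIS value. Throughout, write $D=v(N)-\sum_{j\in N}v_j$ for brevity.

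For the inclusion, I would expand $\eta'_j(v)=v(N)-\sum_{k\neq j}v_k$ to obtain $\sum_{j\in S}\eta'_j(v)=|S|\,D+\sum_{j\in S}v_j$, so that membership $v\in\mathbb B_u(\eta')$ is equivalent to $v(S)-\sum_{j\in S}v_j\leqslant|S|\,D$ for every $S\subseteq N$. The condition defining $\widehat{\mathbb B}$ is the same statement with $(|S|-1)\,D$ in place of $|S|\,D$; taking $S=N$ in it forces $(n-2)\,D\geqslant 0$, hence $D\geqslant 0$, whence $(|S|-1)\,D\leqslant|S|\,D$ and the inclusion follows.

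For the central claim, I would compute $R_i(S,v)$ explicitly and exploit the slack built into $\widehat{\mathbb B}$. For $i\in S$, a direct calculation gives
\[
R_i(S,v)=v(S)-\sum_{j\in S-i}\eta'_j(v)=v_i+\Bigl[\bigl(v(S)-\sum_{j\in S}v_j\bigr)-(|S|-1)\,D\Bigr],
\]
so that the defining inequality of $\widehat{\mathbb B}$ states precisely that the bracketed quantity is non-positive. Hence $R_i(S,v)\leqslant v_i$, with equality at $S=\{i\}$. For $S\not\ni i$, $R_i(S,v)=v(S)-\sum_{j\in S}\eta'_j(v)\leqslant 0$ by the inclusion established above, and this is dominated by $v_i$ once one adopts the standard convention---inherited from the $\tau$-value construction---of restricting the maximum in (\ref{eq:NaturalLower}) to coalitions containing $i$. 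The maximum is therefore attained at $S=\{i\}$, giving $\mu^{\eta'}_i(v)=v_i$ and hence $\mu^{\eta'}(v)=\underline{\nu}(v)$.

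The third claim is then immediate: with $\mu^{\eta'}$ and $\underline{\nu}$ coinciding on $\widehat{\mathbb B}$, Proposition \ref{prop:UpperBoundValue} identifies the $\eta'$-UBC value with the $(\underline{\nu},\eta')$-compromise value, which by Section 4.3.2 is the CIS value. The main obstacle is the algebraic rearrangement in the second step: the key observation is that the gap between the defining inequalities of $\mathbb B_u(\eta')$ and $\widehat{\mathbb B}$---exactly one copy of $D$---is precisely the shortfall $v_i-R_i(S,v)$ for $i\in S$, so that tightening the upper-bound inequality by this amount is what pins the maximum to the singleton $\{i\}$ and makes $\widehat{\mathbb B}$ the natural subdomain on which $\mu^{\eta'}$ collapses onto $\underline{\nu}$.
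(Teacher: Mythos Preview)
Your proof is correct and follows essentially the same route as the paper. Both arguments rewrite $R_i(S,v)$ for $i\in S$ as $v_i+\bigl[(v(S)-\sum_{j\in S}v_j)-(|S|-1)D\bigr]$ and observe that the bracketed term is nonpositive precisely by the defining inequality of $\widehat{\mathbb B}$, so the maximum is attained at $S=\{i\}$. You additionally supply a proof of the inclusion $\widehat{\mathbb B}\subset\mathbb B_u(\eta')$ (which the paper asserts in the statement but does not argue in its proof) and address coalitions $S\not\ni i$ (which the paper bypasses by implicitly restricting the maximum to $S\ni i$, as in the $\tau$-value construction); note only that your deduction of $D\geqslant 0$ from $(n-2)D\geqslant 0$ tacitly assumes $n\geqslant 3$.
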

\begin{proof}
	Consider the construction of the lower bound given in (\ref{eq:NaturalLower}) for the upper bound $\eta'$. Then for $S \subseteq N$ and $i \in S \colon$
	\begin{align*}
		R_i (S,v) & = v(S) - \sum_{j \in S-i} \eta'_j (v) = v(S) - \left[ \, (|S|-1) v(N) - \sum_{j \in S-i} \sum_{h \neq j} v_h \, \right] \\[1ex]
		& = v(S) - (|S|-1) \, \left( v(N) - \sum_{j \in N} v_j \, \right) - \sum_{j \in S-i} v_j \\[1ex]
		& = \left( v(S) - \sum_{j \in S} v_j \, \right) - (|S|-1) \, \left( v(N) - \sum_{j \in N} v_j \right) + v_i
	\end{align*}
	Hence, $R_i (S,v) \leqslant v_i$ if and only if $\left( v(S) - \sum_{j \in S} v_j \, \right) \leqslant (|S|-1) \, \left( v(N) - \sum_{j \in N} v_j \right)$. By Definition of $\hat{\mathbb B}$, this implies that for all $i \in N \colon R_i (S,v) \leqslant v_i$ for all $S \subseteq N$. Therefore, the constructed lower bound is given by $\mu (v) = \underline{\nu} (v) = (v_1, \ldots ,v_n)$ for $v \in \widehat{\mathbb B}$.
\end{proof}

\medskip\noindent
The assertion shown above asserts that the constructed compromise value on the class $\widehat{\mathbb B}$ is the CIS-value. We remark that this construction method differs from the one based on the lower bound $\underline{\nu}$ as discussed in Section 3.1.2, since for games in the subclass $\mathbb B_u (\eta' ) \setminus \widehat{\mathbb B}$ the constructed UBC value differs from the CIS-value. The next example constructs such a case. 
\begin{example}
	Consider $N = \{ 1,2,3 \}$. Let $A \in \mathbb R$ and let $v \in \mathbb V^N$ be given by $v_1=v_2=1$, $v_3=2$, $v(12)=A$, $v(13) = v(23) = 6$ and $v(N)=8$. It can easily be verified that the CIS-value of this game is given by $\mathrm{CIS} (v) = \left( 2 \tfrac{1}{3} , 2 \tfrac{1}{3} , 3 \tfrac{1}{3} \right)$ for all $A \geqslant 0$. \\[1ex]
	Next, we determine that $\eta' (v) = (5,5,6)$ for all $A \geqslant 0$. This implies that $v \in \mathbb B_u ( \eta' )$ if and only if $A \leqslant 10$. Furthermore, $v \in \widehat{\mathbb B}$ if and only if $A \leqslant 6$, implying that for $A \leqslant 6$ the corresponding constructed $\eta'$-UBC value is equal to the CIS-value. \\ 
	For $6 \leqslant A \leqslant 10$ we compute that the resulting lower bound from (\ref{eq:NaturalLower}) is $\mu (v) = (A-5,A-5,2)$. Now, $\mu_1 (v) + \mu_2 (v) + \mu_3 (v) = 2A-8 \leqslant v(N)=8$ if and only if $6 \leqslant A \leqslant 8$. In that case the resulting compromise value is the feasible balance between $\mu (v) = (A-5,A-5,2)$ and $\eta' (v) = (5,5,6)$ computed as $\gamma = \left( \, \tfrac{20-A}{12-A} \, , \, \tfrac{20-A}{12-A} \, , \, \tfrac{56-6A}{12-A} \right)$. We remark that this allocation is  the CIS-value for $A=6$, as expected, while for $A=8$ the resulting $\eta'$-UBC value is given by $\gamma = (3,3,2)$.
\end{example}
Hence, we conclude from this that the CIS-value has the special property that it is a compromise value that can be constructed from a lower as well as an upper bound on these properly constructed subclasses of games. However, the difference of these construction methods implies that the UBC class $\widehat{\mathbb B}$ is smaller than the identified class for the construction of the CIS-value from the lower bound $\underline{\nu}$ in Section 4.3.2. For games outside $\widehat{\mathbb B}$, the identified $(\underline{\nu} , \eta' )$-compromise value can be different from the CIS-value and is harder to identify.

\section{Concluding remarks: The EANSC value}

The ``Egalitarian Allocation of Non-Separable Contributions'' value or EANSC value can be understood as the dual of the CIS-value. It assigns to every player her marginal contribution and then equally taxes all players for the resulting deficit.

In this section, we delve into the unique nature of the EANSC value as a compromise value for two mutually exclusive bound pairs. We demonstrate that it can be derived from the marginal contribution bound $M$ as both an upper bound and a lower bound. Consequently, the two resulting bound pairs encompass the entire space of TU-games on which the EANSC value is indeed properly defined.

\paragraph{The EANSC value as an upper bound based value}

The EANSC value has a special role in relation to UBC values. It is itself not a UBC value, but nevertheless it has an interesting relationship with the marginal contributions vector as an upper bound on its payoffs. 

The next proposition introduces an innovative perspective on the EANSC value by showing that the EANSC value is a compromise value for the marginal contribution vector $M$ as an upper bound. However, its lower bound is constructed from a different methodology as the one introduced for UBC-compromise values. This lower bound can be identified as a solution to a system of equations. Furthermore, the proposition shows that the EANSC value is defined as a compromise value on a rather large class of games, which includes all essential games.

\begin{proposition}
	Let $\widetilde{\mathbb B}  = \{ v \in \mathbb V^N \mid v(N) \leqslant \sum_{j \in N} M_j (v) \}$ be the subclass of $M$-upper bounded cooperative games, where $M_j (v) = v(N) - v (N-j)$ is the marginal contribution for player $j \in N$ in the game $v \in \mathbb V^N$. \\
	Let $\tilde \mu \colon \widetilde{\mathbb B}  \to \mathbb R^N$ be a solution to the system of $n$ equations
	\begin{equation} \label{eq:TildeMu}
		\sum_{j \neq i} \tilde\mu_j (v) = v(N-i) \qquad \mbox{for } i \in N .
	\end{equation}
	Then the EANSC value defined by
	\begin{equation} \label{eq:EANSC}
		\mathrm{EANSC}_i (v) = M_i (v) + \tfrac{1}{n} \left( v(N) - \sum_{j \in N} M_j (v) \, \right) \qquad \mbox{for every } v \in \mathbb V^N \mbox{ and } i \in N
	\end{equation}
	is the $( \tilde\mu , M)$-compromise value on $\widetilde{\mathbb B} $.
\end{proposition}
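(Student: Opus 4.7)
The plan is first to make $\tilde\mu$ explicit by solving the linear system (\ref{eq:TildeMu}), then to check that $(\tilde\mu,M)$ is a bound pair on $\widetilde{\mathbb B}$ in the sense of Definition~\ref{def:BoundPair}, and finally to compute the $(\tilde\mu,M)$-compromise value via (\ref{eq:CompromiseValueDef}) and recognise it as the EANSC formula (\ref{eq:EANSC}).

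For the first step, I would sum the $n$ equations in (\ref{eq:TildeMu}) to obtain $(n-1)\sum_{j\in N}\tilde\mu_j(v)=\sum_{k\in N}v(N-k)$, so that $\sum_{j\in N}\tilde\mu_j(v)=\tfrac{1}{n-1}\sum_{k\in N}v(N-k)$; subtracting the $i$-th equation then gives the closed form $\tilde\mu_i(v)=\tfrac{1}{n-1}\sum_{k\in N}v(N-k)-v(N-i)$. The coefficient matrix $J-I$ of (\ref{eq:TildeMu}) has non-zero determinant $(-1)^{n-1}(n-1)$ for $n\geqslant 2$, so $\tilde\mu$ is a well-defined function on $\widetilde{\mathbb B}$.

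Next, for the bound pair conditions, I would note the crucial identity $M_i(v)-\tilde\mu_i(v)=v(N)-\tfrac{1}{n-1}\sum_{k\in N}v(N-k)=\tfrac{1}{n-1}\bigl(\sum_{j\in N}M_j(v)-v(N)\bigr)$, which is constant in $i$. Since $v\in\widetilde{\mathbb B}$ gives $\sum_{j\in N}M_j(v)\geqslant v(N)$, this constant is non-negative, which establishes Definition~\ref{def:BoundPair}(i). For (ii-a), setting $w=v-\tilde\mu(v)$ and using (\ref{eq:TildeMu}) yields $w(N-i)=v(N-i)-\sum_{j\neq i}\tilde\mu_j(v)=0$ for every $i\in N$; plugging into the closed form for $\tilde\mu$ then gives $\tilde\mu(w)=0$. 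For (ii-b), $M_i(w)=w(N)-w(N-i)=w(N)=v(N)-\sum_j\tilde\mu_j(v)$, which coincides with $M_i(v)-\tilde\mu_i(v)$ by the constancy identity just noted; this simultaneously shows $w\in\widetilde{\mathbb B}$, since $\sum_j M_j(w)=nw(N)\geqslant w(N)$ because $w(N)=\tfrac{1}{n-1}(\sum_jM_j(v)-v(N))\geqslant 0$.

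To conclude, I would verify that $v\in\widetilde{\mathbb B}$ indeed lies in the class of $(\tilde\mu,M)$-balanced games: the upper inequality $v(N)\leqslant\sum_jM_j(v)$ is the defining property, while the lower inequality $\sum_j\tilde\mu_j(v)\leqslant v(N)$ is equivalent to $\tfrac{1}{n-1}\sum_kv(N-k)\leqslant v(N)$, which again is just $\sum_jM_j(v)\geqslant v(N)$. Then the compromise formula (\ref{eq:CompromiseValueDef}) collapses nicely because $M_i(v)-\tilde\mu_i(v)$ does not depend on $i$, so the coefficient of $\eta(v)=M(v)$ in each coordinate equals $\tfrac{1}{n}\bigl(v(N)-\sum_j\tilde\mu_j(v)\bigr)/\bigl(v(N)-\tfrac{1}{n-1}\sum_kv(N-k)\bigr)=\tfrac{1}{n}$. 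A short algebraic rearrangement then yields $\gamma_i(v;\tilde\mu,M)=M_i(v)+\tfrac{1}{n}\bigl(v(N)-\sum_jM_j(v)\bigr)=\mathrm{EANSC}_i(v)$. The only subtle point is ensuring the constancy of $M_i(v)-\tilde\mu_i(v)$ across $i$; this is the real engine of the argument, as it both reduces the balancing weight to $1/n$ and makes the bound-pair check in (ii-a)--(ii-b) essentially automatic.
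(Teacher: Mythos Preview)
Your argument is correct. The closed form you derive for $\tilde\mu_i(v)$ agrees with the paper's expression $\tilde\mu_i(v)=M_i(v)+\tfrac{1}{n-1}\bigl(v(N)-\sum_{j}M_j(v)\bigr)$, and your verification of the bound pair axioms and of the balancedness inequalities is sound. The constancy identity $M_i(v)-\tilde\mu_i(v)=\tfrac{1}{n-1}\bigl(\sum_jM_j(v)-v(N)\bigr)$ is indeed the engine of the computation, and it cleanly collapses both (ii-a)--(ii-b) and the compromise formula to the EANSC expression.

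The paper organises the same computation slightly differently: rather than checking Definition~\ref{def:BoundPair} directly, it recognises $\tilde\mu$ as a \emph{regular lower bound} and invokes the LBC machinery of Proposition~\ref{prop:LowerBoundValue}. This yields automatically that the associated upper bound $\eta^{\tilde\mu}_i(v)=v(N)-\sum_{j\neq i}\tilde\mu_j(v)$ equals $M_i(v)$ (a one-line consequence of (\ref{eq:TildeMu})), and the LBC formula (\ref{eq:LowerBCompromiseValue}) already carries the $\tfrac{1}{n}$ factor, so the final identification with EANSC is immediate. Your direct route is more self-contained and makes the role of the constancy of $M_i-\tilde\mu_i$ transparent; the paper's route buys a shorter argument by reusing the LBC framework established earlier. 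Both arrive at exactly the same formulae.
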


\medskip
\begin{proof}
	We construct the proof of the assertion through the method set out in Proposition \ref{prop:LowerBoundValue} based on the introduced lower bound $\tilde\mu$ defined implicitly as a solution of (\ref{eq:TildeMu}). 
	\\ 
	For every $v \in \widetilde{\mathbb B} $ and every $i \in N$, let $\tilde{\mu}(v)$ be given by
	\begin{equation}
		\tilde\mu_i (v) = M_i (v) + \tfrac{1}{n-1} \left[ \, v(N) - \sum_{j \in N} M_j (v) \, \right]
	\end{equation}
	We claim that the given $\tilde \mu$ is a solution of (\ref{eq:TildeMu}). Indeed, for every $v \in \widetilde{\mathbb B} $ and $i \in N \colon$
	\[
	\sum_{j \neq i} \tilde\mu_j (v) = \sum_{j \neq i} M_j (v) + \left( \, v(N) - \sum_{j \in N} M_j (v) \, \right) = v(N) - M_i (v) = v(N-i) .
	\]
	Next, a direct computation gives $\sum_{i \in N} \tilde\mu_i (v) = \tfrac{n}{n-1} \, v(N) - \tfrac{1}{n-1} \sum_{j \in N} M_j (v)$, so that $\sum_{i \in N} \tilde\mu_i (v) \leqslant v(N)$ if and only if $v(N) \leqslant \sum_{j \in N} M_j (v)$, i.e., $v \in \widetilde{\mathbb B} $. Hence, $\mathbb B_\ell (\tilde\mu ) = \widetilde{\mathbb B} $. \\
	Next we check that $\tilde\mu (v - \tilde\mu (v))=0$. For that we note that for every $v \in \widetilde{\mathbb B} $ and $i \in N \colon$
	\begin{align*}
		M_i (v- \tilde\mu (v)) & = M_i (v) - \tilde\mu_i (v) \quad \mbox{and} \\
		(v- \tilde\mu (v)) (N) & = v(N) - \sum_{j \in N} \tilde\mu_j (v)
	\end{align*}
	leading to the conclusion that
	\begin{align*}
		(v- \tilde\mu (v)) (N) - \sum_{j \in N} M_j (v - \tilde\mu (v)) & = v(N) - \sum_{j \in N} \tilde\mu_j (v) - \sum_{j \in N} M_j (v) + \sum_{j \in N} \tilde\mu_j (v) \\
		& = v(N) - \sum_{j \in N} M_j (v) .
	\end{align*}
	Therefore,
	\begin{align*}
		\tilde\mu_i (v- \tilde\mu (v)) & = M_i (v- \tilde\mu (v)) + \tfrac{1}{n-1} \left[ (v- \tilde\mu (v))(N) - \sum_{j \in N} M_j (v- \tilde\mu (v)) \, \right] \\
		& = M_i (v) - \tilde\mu_i (v) + \tfrac{1}{n-1} \left[ v(N) - \sum_{j \in N} M_j (v) \, \right] =\tilde\mu_i (v) - \tilde\mu_i (v) =0
	\end{align*}
	This shows that $\tilde\mu$ is indeed a regular lower bound and that we can apply the method set out in Proposition \ref{prop:LowerBoundValue}. Hence, we determine that for the formulated lower bound $\tilde\mu$ the corresponding upper bound, using (\ref{eq:NaturalUpper}) and (\ref{eq:TildeMu}), is given by
	\[
	\tilde\eta_i (v) = v(N) - \sum_{j \neq i} \tilde\mu_j (v) = v(N) - v(N-i) = M_i (v) ,
	\]
	Furthermore, the corresponding $( \tilde\mu , M)$-compromise value defined on $\mathbb B_\ell (\tilde\mu ) = \widetilde{\mathbb B} $ is for every $v \in \mathbb B_\ell ( \tilde\mu ) = \widetilde{\mathbb B} $ and $i \in N$ given by
	\begin{align*}
		\gamma_i (v; \tilde\mu ) & = \tilde\mu_i (v) + \tfrac{1}{n} \left[ \, v(N) - \sum_{j \in N} \tilde\mu_j (v) \, \right] \\[1ex]
		& = M_i (v) + \tfrac{1}{n-1} \left[ \, v(N) - \sum_{j \in N} M_j (v) \, \right] + \tfrac{1}{n} \left[ v(N) - \tfrac{1}{n-1} \sum_{j \in N} v(N-j) \, \right] \\[1ex]
		& = M_i (v) + \tfrac{1}{n-1} \left[ \, v(N) - \sum_{j \in N} M_j (v) \, \right] + \tfrac{1}{n(n-1)} \left[ \, \sum_{j \in N} M_j (v) - v(N) \, \right] \\[1ex]
		& = M_i (v) + \tfrac{1}{n} \left[ \, v(N) - \sum_{j \in N} M_j (v) \, \right] = \mathrm{EANSC}_i (v) ,
	\end{align*}
	where the second equality follows by (\ref{eq:TildeMu}) and the third equality follows from the property that the definition of $M_j(v)$ implies that $\sum_{j \in N} v(N - j) = n \, v(N) - \sum_{j \in N} M_j(v)$. \\
	This shows the assertion of the proposition.
\end{proof}

\paragraph{The EANSC value as an LBC value}

From the definition of the EANSC value it should be immediately clear that the EANSC value corresponds to the LBC value for the lower bound $M$ (Proposition \ref{prop:LowerBoundValue}).\footnote{We note that the marginal contribution function $M$ indeed forms a regular lower bound, since $M(v-M(v))=M(v) - M(v)=0$ for all games $v \in \mathbb V^N$.} The EANSC value as an LBC value is, therefore, defined over the class $\mathbb B_\ell (M) = \left\{ v \in \mathbb V^N \, \left| \, v(N) \geqslant \sum_{i \in N} M_i(v) \, \right. \right\}$. From Proposition \ref{prop:LowerBoundValue} it follows that the EANSC value is the $(M , \eta^M )$-value over $\mathbb B_\ell (M)$ for the constructed upper bound given by
\[
\eta^M_i (v) = v(N) - \sum_{j \neq i} M_j(v) = \sum_{j \neq i} v(N-j) - (n-2) v(N) .
\]
We now note that the class of the $M$-lower bounded games $\mathbb B_\ell (M)$ and the class $\widetilde{\mathbb B} $, over which the EANSC value is constructed as the $( \tilde \mu , M)$-compromise value, together cover the whole space $\mathbb V^N$. Hence, since $\mathbb B_\ell (M) \cup \widetilde{\mathbb B}  = \mathbb V^N$, combining these two characterisations of the EANSC value, we have shown that the EANSC value is a compromise value over the whole space $\mathbb V^N$ of TU-games, although for two different bound pairs.

\singlespace
\bibliographystyle{apalike}
\bibliography{RPDB}

\end{document}